\newif\ifconference
\newtheorem{theorem}{Theorem}
\newtheorem{corollary}[theorem]{Corollary}
\newtheorem{definition}[theorem]{Definition}
\newtheorem{lemma}[theorem]{Lemma}
\newtheorem{fact}[theorem]{Fact}
\theoremstyle{remark}
\date{}
\title{Additive Noise Mechanisms for Making Randomized Approximation Algorithms Differentially Private
}
\author{
	Jakub Tětek\\
	\texttt{j.tetek@gmail.com}\\
	Basic Algorithms Research Copenhagen,\\
        University of Copenhagen
}
\begin{document}
\maketitle

\begin{abstract}
The exponential increase in the amount of available data makes taking advantage of them without violating users' privacy one of the fundamental problems of computer science.
This question has been investigated thoroughly under the framework of differential privacy. However, most of the literature has not focused on settings where the amount of data is so large that we are not even able to compute the exact answer in the non-private setting (such as in the streaming setting, sublinear-time setting, etc.). This can often make the use of differential privacy unfeasible in practice.

In this paper, we show a general approach for making Monte-Carlo randomized approximation algorithms differentially private. We only need to assume the error $R$ of the approximation algorithm is sufficiently concentrated around $0$ (e.g.\ $\mathbb{E}[|R|]$ is bounded) and that the function being approximated has a small global sensitivity $\Delta$. Specifically, if we have a randomized approximation algorithm with sufficiently concentrated error which has time/space/query complexity $T(n,\rho)$ with $\rho$ being an accuracy parameter, we can generally speaking get an algorithm with the same accuracy and complexity $T(n,\Theta(\epsilon \rho))$ that is $\epsilon$-differentially private.

Our technical results are as follows. First, we show that if the error is subexponential, then the Laplace mechanism with error magnitude proportional to the sum of the global sensitivity $\Delta$ and the \emph{subexponential diameter} of the error of the algorithm makes the algorithm differentially private. This is true even if the worst-case global sensitivity of the algorithm is large or infinite. We then introduce a new additive noise mechanism, which we call the zero-symmetric Pareto mechanism. We show that using this mechanism, we can make an algorithm differentially private even if we only assume a bound on the first absolute moment of the error $\mathbb{E}[|R|]$.

Finally, we use our results to give either the first known or improved sublinear-complexity differentially private algorithms for various problems. This includes results for frequency moments, estimating the average degree of a graph in sublinear time, rank queries, or estimating the size of the maximum matching. 
Our results raise many new questions and we state multiple open problems.
\end{abstract}
 \thispagestyle{empty}
\newpage
\clearpage
\setcounter{page}{1}

\section{Introduction}
With the increase in the amount of available data, the problem of analyzing it in a privacy-preserving manner has become a central problem in computer science. 
One commonly used tool for this task is
differential privacy, which is a well-established notion of privacy that is commonly used in data analysis and machine learning. 
However, with some notable exceptions, the literature on differential privacy has focused on the setting where the amount of data is small enough that we would be able to practically solve a given problem exactly in the non-private setting.
However, in practice, this assumption is often not realistic -- this is after all the reason for the existence of (among others) streaming and sublinear-time algorithms. 

Our goal is thus to get very efficient (sublinear) algorithms that at the same time guarantee differential privacy. In the streaming or sublinear-time setting, the error coming from the algorithm not being exact will generally speaking be much bigger than the amount of noise that the given problem necessitates for ensuring differential privacy.
The main objective in this setting is thus not to simply minimize the amount of noise we add, but rather to achieve a given level of accuracy while
minimizing the complexity (e.g.~space, time, or query complexity) of the algorithm. Of course, some amount of noise inherently has to be added to achieve privacy, but this is usually so small, that one would need linear complexity to get such a level of accuracy even without privacy. In the sublinear regime, we thus usually do not have to worry whether a given level of accuracy is achievable and we instead focus as our central objective on the complexity needed to achieve it.

One of the main difficulties in making sublinear algorithms private is that most sublinear-time and streaming algorithms are randomized and give only probabilistic guarantees on the quality of the output.
This makes adding noise based on global sensitivity\footnote{Global sensitivity of a function $g$ with respect to a relation $\sim$ is defined as $\sup_{x \sim x'} |g(x) - g(x')|$.} -- which is commonly used to get differentially private algorithms -- unsuitable for this situation, as in the worst case the global sensitivity of the approximation algorithm\footnote{Here, we see the approximation algorithm as a deterministic function of the input and a string of random bits.} can be very large even if the global sensitivity of the function being approximated is small. In this paper, we propose a way to get around this issue by showing additive noise mechanisms that only need that $(1)$ the function being approximated has low global sensitivity and $(2)$ the answer of the algorithm is sufficiently concentrated around the true value.

We give two main results -- one under the assumption that the error has subexponential tails, while the other only assumes bounded mean deviation (or higher moments). While the first one has a much stronger assumption about the error distribution, it is stronger in that it also works for multiple adaptive queries that are \emph{not answered independently}. This is useful for example for streaming algorithms, where multiple queries can be answered using a single sketch and are thus not answered independently. Note that the standard composition theorem would allow us to perform multiple queries only if they were answered independently.

We use our results to give new differentially private algorithms for various problems: for maximum matching under node-level privacy, frequency moments, counting connected components under edge-level privacy, and rank queries. 
We also show how a common technique for designing relative-approximation sublinear-time algorithms -- \emph{advice removal by geometric search} -- can be made differentially private. This implies an edge-differentially-private algorithm for estimating the average degree of a graph, improving upon the state of the art \cite{blocki2022privately}, but we think it could also be useful for many other problems. Our algorithm for maximum matching also answers an open problem from \cite{blocki2022privately}. For the other mentioned problems, we either give the first sublinear-complexity algorithm that ensures \emph{pure} differential privacy, or one that is more efficient.

\subsection{High-level view of technical results}
We now informally state our main technical results and sketch how we use them in order to get private algorithms.
\vspace{-1em}
\paragraph{Subexponential error tails.} Our first central result is as follows: 
\smallskip
\\ \noindent \textbf{\Cref{thm:subexponential}, simplified version.} \begin{em}
Assume we have an algorithm $A(D)$ for $D$ being a dataset. Assume there exists a function $g$ with global sensitivity $\leq \Delta_1$ w.r.t.\ $D$ such that $A(D) - g(D)$ has \emph{subexponential diameter} $\leq \Delta_2$, i.e.\ $\mathbb{P}[|A(D) - g(D)| \geq t] \leq 2e^{-t/\Delta_2}$ for $t \geq 0$. Then releasing $A(D) + \text{Laplace}\big(O\big((\Delta_1 + \Delta_2) /\epsilon\big)\big)$ is $\epsilon$-differentially private for $\epsilon \leq O(1)$.

Moreover, with noise $\text{Laplace}\big(O\big(k (\Delta_1 + \Delta_2) /\epsilon\big)\big)$, this also holds if we make $k$ such releases with different algorithms $A_1, \cdots, A_k$ chosen adaptively that are executed with the same randomness.
\end{em}\medskip\\ \noindent
Note that this generalizes the claim that the Laplace mechanism with noise magnitude proportional to the global sensitivity gives differential privacy (this can be seen by setting $A(D) = g(D)$). Note also that in the second half, the algorithms use the same randomness, and we thus cannot get this part of the result by standard composition.

One would usually use this theorem as follows.
We start with a randomized approximation algorithm whose error is subexponentially concentrated around zero (often, this is either known or easy to prove) that approximates a parameter with a small global sensitivity $\Delta_1$. This is the case for example for the YYI maximum matching algorithm \cite{yoshida2009improved} under node-level privacy or the KLL sketch \cite{karnin2016optimal} for rank queries. Suppose the complexity (such as time/space/query/sample or other complexity) of the algorithm is $T(n,\rho)$ and has additive error of scale (i.e.\ with subexponential diameter) $\rho n$. If we want the final error with privacy to be $O(\rho n)$, then we run the algorithm with error parameter $\epsilon \rho$, making the error's subexponential diameter be $O(\epsilon \rho n)$. We then get from \Cref{thm:subexponential} that adding noise of magnitude $O(\rho n)$ is sufficient to get $\epsilon$-differential privacy, assuming $\Delta_1$ is sufficiently small, giving us the desired result. The complexity of the private algorithm will thus be $T(n, \Theta(\epsilon \rho))$. This allows us to achieve a given level of accuracy\footnote{This is true unless $\rho$ is very small, as otherwise the global sensitivity will necessitate some level of noise. As we noted, this usually happens only when $T(n, \epsilon \rho) \geq \Omega(n)$, making this uninteresting for our sublinear setting.} under pure differential privacy, while not significantly worsening the algorithm's complexity. We describe this approach in greater detail and with general failure probabilities (not just constant) in \Cref{sec:applications}.

To illustrate the second part of the theorem, consider for example a rank queries sketch (see \Cref{sec:rank} for details). The algorithms $A_1, \cdots, A_k$ correspond to making $k$ adaptive queries to the sketch of a dataset (assume we are given $k$ queries we need to perform) and the fact that the algorithms use the same randomness corresponds to us querying the same sketch (as compared to $k$ independent sketches). Specifically, the algorithm $A_i$ here builds the sketch (using the shared randomness), and then performs the $i$-th query. Note that the fact that we only use one sketch \emph{prevents us from using the composition theorem} to get this from just the first part of the theorem.

\vspace{-1em}
\paragraph{Polynomial error tails.} We then prove that in the case of a single query, it is sufficient to assume a bound on some deviation moments (multiple independently answered queries can be handled using the standard composition theorem). 
\smallskip
\\ \noindent \textbf{\Cref{thm:polynomial}, simplified version.} \begin{em}
Let us have an algorithm $A$ such that there exists a function $g$ with global sensitivity $\leq \Delta$ and such that $\mathbb{E}[|A(D) - g(D)|^3] \leq \Delta^3$ for any dataset $D$. Then for $\epsilon \leq O(1)$ there exists a random variable $Y$ with $\mathbb{E}[|Y|] \leq O(\Delta/\epsilon)$, such that $A(\cdot) + Y$ is $\epsilon$-differentially private.
\end{em} \medskip\\ \noindent
This theorem can be used in a way similar to \Cref{thm:subexponential}, whose use we described above. Moreover, it holds that if we have $\mathbb{E}[|A(D) - g(D)|] \leq \Delta$, then by taking a median of $5$ independent executions of $A$, we get an algorithm $A'$ whose error's third moment is also bounded: $\mathbb{E}[|A'(D) - g(D)|^3] \leq O(\Delta^3)$ \cite{larsen2021countsketches}. This allows us to use the theorem even if we only have a bound on $\mathbb{E}[|A(D) - g(D)|]$ or the mean squared error $\mathbb{E}[(A(D) - g(D))^2]$. The approach to using \Cref{thm:subexponential} we described above can with \Cref{thm:polynomial} give the following ``nicely packaged version'' of the theorem:
\smallskip
\\ \noindent \textbf{\Cref{lem:privatize_poly}, simplified version.} \begin{em}
Suppose there is an algorithm $A$ approximating a function $g$ with global sensitivity $\Delta$ such that $E[|A(x) - g(x)|] \leq \rho f(x)$ for some function $f$ with  time/space/query complexity $T(n, \rho)$. Then for $\epsilon \leq O(1)$ there exists an $\epsilon$-differentially private algorithm $A'$ such that when $\epsilon \rho \geq \Omega(\Delta /f(x))$, it holds $P[|A'(x) - g(x)| > \rho f(x)] \leq 1/3$ and complexity $O(T(n, \epsilon \rho))$.
\end{em} \medskip\\ \noindent
The failure probability $1/3$ can be decreased by standard probability amplification.

The problem of differentially private randomized approximation algorithms has been explored independently of this work by \citet{blocki2022make}.
The techniques used in \cite{blocki2022make} differ significantly from those used in this paper. Specifically, in \cite{blocki2022make}, the authors set the failure probability of the algorithm to be $\leq \delta$ (for example by probability amplification), thus limiting the global sensitivity of the algorithm up to an event of probability $\leq \delta$. This allows them to rely on the standard result for getting differential privacy based on global sensitivity\footnote{They also consider functions with low smoothed sensitivity instead of just low global sensitivity; we do not consider that in this paper.}. Specifically, if the algorithm has complexity $T(n, \rho)$ and one uses probability amplification, then the approach of \cite{blocki2022make} gives an $(\epsilon,\delta)$-differential privacy in complexity $O(T(n, \epsilon \rho) \log \delta^{-1})$.
In this paper, instead of relying just on global sensitivity, we instead prove privacy from first principles. At the cost of assuming that the error is sufficiently concentrated, we show that the probability amplification step is not needed, allowing us to get $\epsilon$-differential privacy in the better complexity of $T(n, \epsilon \rho)$.

Our approach allows us to give more efficient algorithms than \citet{blocki2022make} for several problems: estimating the average degree of a graph, estimating the size of a maximum matching, and estimating the number of connected components; 
\emph{for all these problems we get pure differential privacy {instead of approximate $(\epsilon,\delta)$-privacy} while saving a $\log \delta^{-1}$ factor in the complexity}.

\subsection{Our techniques}
\paragraph{Subexponential error, one query.}
Suppose we have a randomized algorithm $A(D)$ for $D$ being a dataset that approximates a function $g(D)$ with global sensitivity $\leq \Delta_1$ for some parameter $\Delta_1$. Define the error $R$ as the random variable $R = A(D) - g(D)$ and assume that it is tightly concentrated around $0$, namely $\mathbb{P}[|R| > t] \leq 2e^{-t/\Delta_2}$ for some value $\Delta_2$ ($\Delta_2$ is an upper bound on the ``subexponential diameter" of $R$). Intuitively speaking, $\Delta_2$ determines the ``scale" of $R$, and $\Delta_2$ is in fact up to a constant factor an upper bound on $\mathbb{E}[|R|]$. Note that the tails of $R$ decrease at least at the same rate as those of the Laplace distribution. This suggests Laplacian noise with large enough magnitude could ``hide $R$''. 
Indeed, we prove that Laplacian noise will guarantee privacy. We now sketch the proof.


\medskip \noindent
\textit{High-level view.} Assume for simplicity that both $\Delta_1, \Delta_2 \leq 1$. The same approach works for general values of $\Delta_1, \Delta_2$ by simple re-scaling. Let $Y \sim \text{Laplace}(c/\epsilon)$ for appropriately chosen value of $c$. We will prove that for any random variable $X$ with subexponential diameter $\leq 3$ \footnote{We choose value $3$ as this is the value we will need below. The claim holds also for larger constants with $c$ chosen appropriately.} (that is $\mathbb{P}[|X| \geq t] \leq 2e^{-t/3}$), the probability density functions satisfy $f_{X+Y}(y)/f_Y(y) = e^{\pm \Theta(\epsilon)}$ for any $y$. We can use this to prove privacy, as we now show. For two neighboring datasets $D_1,D_2$, we set $R_1 = A(D_1) - g(D_1)$ and $R_2 = A(D_2) - g(D_1)$ (note the asymmetry in the definitions). It then holds that $R_1$ has subexponential diameter $\Delta_2 \leq 1$. One can also show that the subexponential diameter of $R_2 = (A(D_2)-g(D_2)) + (g(D_2) - g(D_1))$ is $\leq 3$ (\Cref{lem:sub_inequality}). Let $y' = y-g(D_1)$. It then holds for any $y$ that
\[
\frac{f_{A(D_1)+Y}(y)}{f_{A(D_2) + Y}(y)} = \frac{f_{g(D_1) + R_1+Y}(y)}{f_{g(D_1) + R_2 + Y}(y)} =
\frac{f_{R_1+Y}(y')}{f_{R_2 + Y}(y')} = \frac{f_{R_1+Y}(y')}{f_{Y}(y')}\cdot \frac{f_{Y}(y')}{f_{R_2 + Y}(y')} = e^{\pm \Theta(\epsilon)}
\]
where the last equality uses the claim $f_{X+Y}(y')/f_Y(y') = e^{\pm \Theta(\epsilon)}$ for $X = R_1$ and for $X = R_2$. This implies differential privacy.

\medskip \noindent
\textit{Bounding ratios of density functions.} We now sketch why $f_{X+Y}(y)/f_Y(y) = e^{\pm \Theta(\epsilon)}$. Since $Y$ is continuous, we may re-write
\[
f_{X + Y}(y) = \mathbb{E}[f_Y(y-X)] = \frac{\epsilon}{2c} \mathbb{E}[\exp(-\epsilon |X - y|/c)] = (*)
\]
where the first equality is a standard identity \cite{density_indentity}.
We now use the inequalities $|X-y| \leq |X| + |y|$ and $|X-y| \geq |y| - |X|$. This allows to bound
\begin{align*}
(*) \leq& \frac{\epsilon}{2c} \mathbb{E}[\exp(-\epsilon (|y| - |X|)/c)] =  \frac{\epsilon}{2c} e^{-\epsilon|y|/c} \mathbb{E}[\exp(\epsilon |X|/c)]\\
(*) \geq& \frac{\epsilon}{2c} \mathbb{E}[\exp(-\epsilon (|X| + |y|)/c)] =  \frac{\epsilon}{2c} e^{-\epsilon|y|/c} \mathbb{E}[\exp(-\epsilon |X|/c)]
\end{align*}
while it holds that $f_Y(y) = \frac{\epsilon}{2c} e^{-\epsilon|y|/c}$. It is thus sufficient to prove that $\mathbb{E}[\exp(\epsilon |X|/c)] \leq e^{O(\epsilon)}$ and $\mathbb{E}[\exp(-\epsilon |X|/c)] \geq e^{-O(\epsilon)}$. While the first inequality is standard, the second is not. We will now sketch a proof for both.

\medskip \noindent
\textit{Bounding the expectation of exponentials of a subexponential random variable.} 
If we knew the density function of $X$, we could easily express the expectations as integrals. However, not only we do not have a bound on the density, but the density may even not exist. We thus use the following trick. We use the fact that for any real random variable $Z$, it holds that $Z$ has the same distribution as $F_Z^{-1}(u)$ for $u \sim Unif(0,1)$ where $F_Z$ is the cumulative distribution function (CDF) of $Z$. This allows us to write $\mathbb{E}[e^{-\epsilon|X|/c}] = E_u[e^{-F^{-1}_{\epsilon|X|/c} (u)}]$ and similarly for $\mathbb{E}[e^{\epsilon|X|/c}]$.

Unlike the density function, we do have a bound on the cumulative distribution function. Specifically, we are assuming $\mathbb{P}[|X| > t] \leq 2e^{-t/3}$ which implies that $F_{\epsilon |X|/c}^{-1}(u) \leq -\frac{3 \epsilon}{c} \log(\frac{1-u}{2})$. Upper-bounding the CDF like this reduces the problem to computing the expectation of a function of the uniform random variable, which can be done straightforwardly. This proves the desired bounds.

%
%

\paragraph{Subexponential error, multiple queries.} We would like to be able to release answers to multiple queries which are not answered independently (such as if they are answered based on the same sketch). Since the answers are not independent, we cannot use the composition theorem. We now sketch an alternative approach.

For fixed queries, the above proof goes through with minor modifications even in the multivariate case. Instead of using the inequalities $|y-X| \leq |y| + |X|$ and $|y-X| \geq |y| - |X|$ in the case of $y,X \in \mathbb{R}$, we use the analogous version for $\ell_1$ norms in the case of $y,X \in \mathbb{R}^k$: $\|y-X\|_1 \leq \|y\|_1 + \|X\|_1$ and $\|y-X\|_1 \geq \|y\|_1 - \|X\|_1$. We then use that if $X$ is a vector of $k$ subexponential random variables with diameter $\leq \Delta$, then $\|X\|_1$ has subexponential diameter $\leq 3 k \Delta$ (\Cref{lem:sub_inequality}).

This however gives the result only in the nonadaptive case, when the queries do not depend on the released values: the identity $f_{X+Y}(y) = \mathbb{E}[f_Y(X-y)]$ relies crucially on $X = (X_1, \cdots, X_k)$ and $Y = (Y_1, \cdots, Y_k)$ being independent.
In the case of adaptive queries, $X_i$ could depend on $Y_1, \cdots, Y_{i-1}$ (the query that we perform -- and thus also the answer to it -- can be influenced by the noise we add to the previous answers). We instead use our non-adaptive version of the claim and make it adaptive in a black box fashion, by proving a claim that may be of independent interest: If we have a countable number of mechanisms such that releasing the answers of any \emph{fixed} subset of size $k$ is $\epsilon$-differentially private, then we may also pick this subset \emph{adaptively} and it will still be $\epsilon$-differentially private.

\paragraph{Error with polynomial tails.}
In the case that the error has polynomial tails, we only consider the case of a single query. Our techniques do not seem to generalize to the multivariate case, and we conjecture that this is impossible (see \Cref{sec:open_problems}). The case when multiple queries are answered independently may be still handled by the standard composition theorem.

An approach similar to the one described above can be made to work, with the difference that we use the inequality $|x - X| \geq \max(0, |x|-|X|)$ instead of the weaker $|x-X| \geq |x|-|X|$. This approach, however, requires proving the following inequality for all $y,s \geq 0, \alpha > 1, 0 \leq \epsilon \leq 1$:
\begin{align*}
\int_0^1 \min\left((1+|y|/s)^\alpha,\left|1- \frac{(1-2^{-1/\alpha}) \epsilon}{(1+|y|/s)(1-u)^{1/\alpha}}\right|^{-\alpha}\right) du \leq 1+\frac{2\alpha-1}{\alpha-1} \epsilon.
\end{align*}
This is the technically most challenging part of this paper. The trick is to bound the inside of the integral for $u \in [0,1-\epsilon (1+|y|/s)^{-\alpha}]$ by a simpler expression that can be successfully integrated. The rest of the interval $[0,1]$ contributes at most $\epsilon$, as its length is $\epsilon (1+|y|/s)^{-\alpha}$ and the maximum value of the function being integrated is $\leq (1+|y|/s)^\alpha$.

\subsection{Related work}

To the best of our knowledge, the work on differentially private approximation algorithms started with private sketches. \citet{mir2011pan} gave pan-private\footnote{An algorithm on an input stream is said to be pan-private if releasing the internal state of the algorithm at any point in the computation is differentially private. It is a strictly stronger notation than differential privacy of the output.} sketches for heavy hitters. An improved sketch has been recently given by \citet{pagh2022improved}. A private version of the deterministic Misra-Gries sketch \cite{misra1982finding} for heavy hitters has been recently given by \citet{lebeda2023better}. Heavy hitters were also investigated in the multi-party computation setting \cite{huang2021frequency}, in the local differential privacy setting \cite{bassily2017practical}, and using cryptographic assumptions \cite{melis2015efficient,ghazi2021power,huang2021frequency}. 

A sketch for fractional frequency moments $F_p$ for $0 \leq p \leq 1$ has been given by \citet{wang2021differentially}. After releasing this paper, \citet{epasto2023differentially} have given an algorithm general value of $p$ in the continual release setting. A sketch for differentially private quantiles has been given by \citet{alabi2022bounded}. A technique for stream sanitization has been given by \citet{kaplan2021note}; this work resulted in improved differentially private sketches for approximate quantiles. An approach for differentially privately estimating distances in euclidean spaces using private sketches has been given by \citet{stausholm2021improved}. A general approach to making linear sketches differentially private was given by \citet{zhao2022differentially}.

A recent line of work has shown that many sketches already provide privacy \emph{by themeselves} or with only small modifications, without adding any noise. \citet{blocki2012johnson} have shown that the Johnson-Lindenstrauss transform by itself ensures differential privacy.
\citet{smith2020flajolet} have shown that this is also the case for the Flajolet-Martin sketch for counting distinct elements and \citet{choi2020differentially} have given a similar result for the LogLog algorithm \cite{durand2003loglog}. This was recently generalized by \citet{dickens2022nearly} who show that this is not only the case for the two above-mentioned sketches, but in fact for a large class of sketches for counting distinct elements. 

As far as sublinear-time algorithms are concerned, \citet{sivasubramaniamdifferentially} have shown a differentially private algorithm that returns a $2+ \rho$ approximation of the number of edges in a graph in time $\tilde{O}_{\rho,\epsilon}(\sqrt{n})$. This has been later improved by \citet{blocki2022privately} to $1+\rho$ approximation in the same complexity. In that paper, the authors also give differentially private sublinear algorithms for approximate maximum matching and vertex cover. A sublinear time algorithm for estimating the median was recently given by \citet{boehler2022secure}.


\section{Preliminaries} \label{sec:preliminaries}
\subsection{Differential privacy}
Throughout the paper, we assume that we have a symmetric ``neighbor" relation $\sim$ on the set of all possible datasets. Intuitively speaking, in the case when we have a database of users, this should correspond to two datasets being the same except for the data of one user whose privacy we are trying to protect.
\begin{definition}[\cite{dwork2006differential}]
A randomized algorithm $M$ with range $S$ is $\epsilon$-differentially private if for any measurable $T \subseteq S$, it holds for any $x \sim x'$ for a symmetric ``neighbor'' relation $\sim$, that
\begin{align*}
e^{-\epsilon} \leq \frac{\mathbb{P}[M(x) \in T]}{\mathbb{P}[M(x') \in T]} \leq e^\epsilon
\end{align*}
\end{definition}

This definition is commonly relaxed to a notion called approximate differential privacy, with the above notion then being called pure privacy. In this paper, we will focus only on pure differential privacy.

If the output of $M$ is a continuous random variable, then it is sufficient to prove that for any $y$ and $x \sim x'$ it holds $e^{-\epsilon}\leq f_{M(x)}(y)/f_{M(x')}(y) \leq e^\epsilon$, where $f_X$ for $X$ being a continuous random variable is the probability density function of $X$.

The global sensitivity \cite{dwork2006calibrating} of a function $g$ is defined as
\[
\sup_{x \sim x'} |g(x) - g(x')|.
\]
\citet{dwork2006calibrating} have shown that if $g$ has global sensitivity $\Delta$, then adding $\text{Laplace}(\Delta/\epsilon)$ provides $\epsilon$-differential privacy.

In the context of graph problems, one often speaks of a mechanism being edge-differentially private, or node-differentially private. These terms refer to the relation $\sim$ that is used. In the case of node-differential privacy, we have $G \sim G'$ iff one can get $G$ from $G'$ by deleting one vertex and the incident edges, or the other way around. In the case of edge-differential privacy, we have $G \sim G'$ iff one can get $G$ from $G'$ by deleting one edge, or the other way around. 

\subsection{Probability theory} \label{sec:prelim_concentration}

If $D$ is a distribution, we use $D^{\otimes k}$ for $k$ being a natural number, to denote the $k$-fold product distribution of $D$. For a random variable $Z$, we denote by $F_Z$ its cumulative distribution function. We denote by $F^{-1}_Z(p) =\inf \{x \in \mathbb{R}: F_Z(x) \geq p\}$ its generalized inverse. It holds that $F^{-1}_Z(u)$ has the same distribution as $Z$ for $u \sim Unif(0,1)$ \cite{mcneil2015quantitative}.
We will need the following claim.
\begin{fact}[\cite{density_indentity}] \label{fact:important_identity}
Let $X,Y$ be independent random variables in $\mathbb{R}^k$, and assume $Y$ has a probability density function (pdf) $f_Y(z)$. Then the pdf of $X+Y$ is $f_{X+Y}(z) = \mathbb{E}[f_Y(z-X)]$.
\end{fact}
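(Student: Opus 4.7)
The plan is to verify the identity $f_{X+Y}(z) = \mathbb{E}[f_Y(z-X)]$ by computing $\mathbb{P}[X+Y \in A]$ for an arbitrary measurable set $A \subseteq \mathbb{R}^k$ and showing that $z \mapsto \mathbb{E}[f_Y(z-X)]$ serves as a Radon--Nikodym derivative of the law of $X+Y$ with respect to Lebesgue measure. Since any two densities of the same distribution must agree almost everywhere, this is enough to identify the pdf as claimed. A useful point to keep in mind is that one should not try to work directly with a density of $X$ (which may not exist); all of the smoothing comes from $Y$.

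The key steps go as follows. First I would condition on $X$, using the tower property and the independence of $X$ and $Y$: for any measurable $A \subseteq \mathbb{R}^k$,
\[
\mathbb{P}[X+Y \in A] = \mathbb{E}\bigl[\mathbb{P}[Y \in A - X \mid X]\bigr] = \mathbb{E}\!\left[\int_{A - X} f_Y(y)\, dy\right].
\]
Next I would perform the translation change of variables $z = y + X$ (Jacobian $1$) inside the inner integral to rewrite it as $\int_A f_Y(z - X)\, dz$. Finally I would apply the Fubini--Tonelli theorem to exchange the outer expectation with the integral over $A$, giving
\[
\mathbb{P}[X+Y \in A] = \mathbb{E}\!\left[\int_A f_Y(z - X)\, dz\right] = \int_A \mathbb{E}[f_Y(z-X)]\, dz.
\]
Because $A$ was arbitrary, this is precisely the statement that $\mathbb{E}[f_Y(z-X)]$ is the density of $X+Y$.

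The one step requiring care is the exchange of expectation and integral. However, since $f_Y$ is non-negative and Borel measurable, the map $(x, z) \mapsto f_Y(z - x)$ is jointly measurable and non-negative on $\mathbb{R}^k \times \mathbb{R}^k$, so Tonelli's theorem applies unconditionally (no integrability hypothesis needed), and measurability of $z \mapsto \mathbb{E}[f_Y(z - X)]$ comes for free as a byproduct. I therefore do not expect any real obstacle; the main conceptual point is simply to route the argument through integrals over test sets rather than attempt a pointwise density calculation.
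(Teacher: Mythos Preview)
Your argument is correct and is the standard way to establish this convolution identity: condition on $X$, change variables inside the conditional integral, and then use Tonelli to swap the expectation with the integral over the test set $A$. You were also right to flag that $X$ need not have a density and that all the regularity comes from $Y$; routing the argument through $\mathbb{P}[X+Y\in A]$ for arbitrary measurable $A$ is exactly what avoids that pitfall.

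There is nothing to compare against here: the paper does not prove \Cref{fact:important_identity} at all, but simply imports it as a cited fact from \cite{density_indentity}. Your write-up would serve perfectly well as a self-contained justification in its place.
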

\noindent
We will also need the following claim:
\begin{lemma}[Moment amplification, \cite{larsen2021countsketches}] \label{lem:moment_amplification}
Let $X_1, \cdots X_{2k-1}$ be i.i.d.~random variables on $\mathbb{R}$. It holds for any $x,c \in \mathbb{R}$ that
\begin{align*}
E[|\text{median}(X_1, \cdots, X_{2k-1}) - x|^{c k}]^{1/ck} \leq O(E[|X_1-x|^c]^{1/c})
\end{align*}
\end{lemma}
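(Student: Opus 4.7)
The plan is to combine the classical median trick with a careful layer-cake argument. By replacing each $X_i$ with $X_i - x$ I may assume $x = 0$; set $M := \text{median}(X_1,\ldots,X_{2k-1})$, $\mu := \mathbb{E}[|X_1|^c]^{1/c}$, and $p(t) := \mathbb{P}[|X_1| > t]$. The first step is the combinatorial observation that $|M| > t$ forces at least $k$ of the $|X_i|$'s to exceed $t$: writing the sorted order as $X_{(1)} \leq \cdots \leq X_{(2k-1)}$ with $M = X_{(k)}$, the event $M > t$ makes $X_{(k)},\ldots,X_{(2k-1)}$ all exceed $t$, and $M < -t$ makes $X_{(1)},\ldots,X_{(k)}$ all fall below $-t$. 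A union bound over the $\binom{2k-1}{k} \leq 4^k$ size-$k$ subsets, together with i.i.d.\ independence, then gives $\mathbb{P}[|M|>t] \leq 4^k p(t)^k$, while Markov's inequality yields $p(t) \leq \mu^c/t^c$ for every $t > 0$.

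Next I would plug this into the layer-cake identity, splitting the integration at the threshold $T := 4^{1/c}\mu$ where the union bound first becomes nontrivial:
$$\mathbb{E}[|M|^{ck}] \;=\; ck \int_0^\infty t^{ck-1}\,\mathbb{P}[|M|>t]\,dt \;\leq\; T^{ck} \;+\; 4^k \cdot ck \int_T^\infty t^{ck-1} p(t)^k\,dt.$$
The step I expect to be the main obstacle is the tail integral. Naively inserting $p(t)^k \leq \mu^{ck}/t^{ck}$ from Markov leaves $\int_T^\infty dt/t$, which diverges; the trick I would use is to apply Markov only to $k-1$ of the $k$ factors, writing $p(t)^k \leq p(t)\cdot \mu^{c(k-1)}/t^{c(k-1)}$. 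The tail then collapses to $\mu^{c(k-1)}\int_T^\infty t^{c-1}p(t)\,dt$, and the remaining one-dimensional integral is at most $\mu^c/c$ by the layer-cake identity for $\mathbb{E}[|X_1|^c]$ run in reverse.

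Putting the two pieces together gives $\mathbb{E}[|M|^{ck}] \leq 4^k(1+k)\,\mu^{ck}$, so $\mathbb{E}[|M|^{ck}]^{1/(ck)} \leq 4^{1/c}(1+k)^{1/(ck)}\,\mu = O(\mu)$ whenever $c$ is bounded away from $0$, which matches the claim. The non-routine ingredient is the hybrid application of Markov in the tail: spending only $k-1$ of the available factors to defeat the polynomial weight $t^{ck-1}$ while preserving a single factor of $p(t)$ makes the remaining integral exactly the $c$-th moment of $|X_1|$, so the calculation closes up to the desired constant.
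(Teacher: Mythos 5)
Your proof is correct. Since the paper cites \Cref{lem:moment_amplification} from \cite{larsen2021countsketches} without reproducing a proof, there is no ``paper's own argument'' to compare against, but your derivation is a sound instance of the standard median-trick amplification, and it closely parallels the union-bound-over-size-$k$-subsets idea the paper itself uses in the proof of \Cref{lem:expectation of median}. The combinatorial reduction $\mathbb{P}[|M|>t]\leq\binom{2k-1}{k}p(t)^k$ is right (if the median exceeds $t$ in absolute value, the $k$ order statistics on the relevant side all do), the split at $T=4^{1/c}\mu$ cleanly absorbs the near-origin part, and the key ``hybrid Markov'' step --- applying $p(t)\leq\mu^c/t^c$ to only $k-1$ of the $k$ factors so that the surviving factor of $p(t)$ pairs with the leftover weight $t^{c-1}$ to form the reversed layer-cake integral $\int_0^\infty t^{c-1}p(t)\,dt=\mu^c/c$ --- is exactly what makes the tail converge rather than leaving a divergent $\int dt/t$. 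The arithmetic checks out: the tail contributes $4^k k\,\mu^{ck}$, the head contributes $4^k\mu^{ck}$, and $(4^k(1+k))^{1/(ck)}\leq 4^{1/c}\cdot 2^{1/c}=8^{1/c}$ since $(1+k)^{1/k}\leq 2$.

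Two caveats you should state explicitly. First, both layer-cake identities require $c>0$ (and $ck>0$), so the ``any $c\in\mathbb{R}$'' in the lemma statement must implicitly mean $c>0$. Second, your implied constant $8^{1/c}$ genuinely depends on $c$ and is not uniform as $c\to 0^+$; this is inherent, not an artifact of your proof --- a two-point distribution on $\{0,2\}$ shows the ratio of the two sides must blow up as $c\to 0^+$ --- and it is invisible in the lemma's $O(\cdot)$ notation but harmless for the paper's actual use, which fixes $c=1$ and $k=3$. Your constant is looser than the ``$\leq 10\rho^3 y^3$'' the paper quotes at that parameter setting, but the lemma is asymptotic, so this does not affect correctness.
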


\paragraph{Concentration of measure.} The notions of \emph{subexponential}
random variables and \emph{subexponential
diameter} $\sigma_{se}[X]$ are central to concentration of measure. 
There are several different definitions for $\sigma_{se}[X]$, that differ by constant factors. See for example \cite[Chapter 2]{vershynin2018high} for an exposition of the various definitions. In this paper, one of the definitions is especially suitable for the way we use it in our proofs, and that is the definition that we use. 

\begin{definition}
Let $X$ be a real random variable. We define the 
subexponential diameter of $X$, denoted by $\sigma_{se}[X]$ as the smallest values for which for any $t > 0$ holds
\begin{align*}
\mathbb{P}[|X| \geq t] \leq& 2\exp(- t/\sigma_{se}[X])
\end{align*}
A random variable $X$ is said to be 
subexponential if $\sigma_{se}[X] < \infty$. 
\end{definition}

It holds that $\sigma_{se}[c X] = c \sigma_{se}[X]$. 
It also holds that
\begin{fact} \label{fact:adding_constant}
For $X$ being a random variable and $c\geq 0$, it holds that $\sigma_{se}[X+c] \leq \sigma_{se}[X]+c/\log 2$.
\end{fact}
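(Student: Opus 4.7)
The plan is to set $\sigma = \sigma_{se}[X]$ and $\sigma' = \sigma + c/\log 2$, and to verify the defining tail inequality $\mathbb{P}[|X+c| \geq t] \leq 2\exp(-t/\sigma')$ for every $t \geq 0$ by splitting on the value of $t$ at the threshold $t_0 := \sigma' \log 2 = \sigma \log 2 + c$. The reason to pick this particular threshold is that it is exactly the point where the right-hand side $2\exp(-t/\sigma')$ drops below $1$, so below $t_0$ any probability bound is trivial, and above $t_0$ the shift by $c$ has been absorbed into the enlarged scale parameter.

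For the small-$t$ regime $0 \leq t \leq t_0$, I would just observe that $2\exp(-t/\sigma') \geq 2\exp(-\log 2) = 1$, and hence the bound holds trivially since probabilities never exceed $1$. No use of the hypothesis on $X$ is needed here; this case is what forces the constant $1/\log 2$ in the statement.

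For the large-$t$ regime $t > t_0$, in particular $t > c$, I would invoke the triangle inequality in the form $|X+c| \leq |X| + c$, which yields $\mathbb{P}[|X+c| \geq t] \leq \mathbb{P}[|X| \geq t-c] \leq 2\exp(-(t-c)/\sigma)$ by the definition of $\sigma_{se}[X]$. It then remains to check that $\exp(-(t-c)/\sigma) \leq \exp(-t/\sigma')$, which after clearing denominators becomes $(\sigma' - \sigma)\,t \geq \sigma' c$, i.e.\ $(c/\log 2)\,t \geq \sigma' c$, i.e.\ $t \geq \sigma' \log 2 = t_0$. That is precisely the case we are in, so the inequality holds.

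The main obstacle here is cosmetic rather than mathematical: choosing the right threshold $t_0$ so that the two cases meet cleanly. The algebra essentially dictates the choice $\sigma' - \sigma = c/\log 2$, because this is the smallest increment to $\sigma$ for which the large-$t$ inequality $(\sigma'-\sigma)t \geq \sigma' c$ starts to hold exactly at the point $t_0$ where the small-$t$ trivial bound stops being tight. Once the threshold is identified, both cases are short, and the proof avoids any appeal to moment generating functions or the density of $X$, relying only on the tail formulation of $\sigma_{se}$ used in the paper.
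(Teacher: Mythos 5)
Your proof is correct and follows essentially the same route as the paper's one-line argument, which bounds $\mathbb{P}[|X+c|\geq t]\leq\min\bigl(1,2e^{-(t-c)/\sigma_{se}[X]}\bigr)\leq\min\bigl(1,2e^{-t/(\sigma_{se}[X]+c/\log 2)}\bigr)$; your two-case split at $t_0=\sigma'\log 2$ is exactly the hidden verification of the second inequality (the small-$t$ case handles where the $\min$ saturates at $1$, and the large-$t$ case is the monotone comparison of the exponentials). The only difference is that you spell out what the paper compresses into the $\min$ notation.
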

\begin{proof}
We can bound $\mathbb{P}[X+c \geq t] \leq \min(1,2e^{-(t-c)/\sigma_{se}[X]}) \leq \min(1, 2e^{-t/(\sigma_{se}[X]+c/\log 2)})$, thus implying the claim.
\end{proof}

Finally, the following is a standard claim, but we need the constant factor, which is specific to the definition of $\sigma_{se}$ that we are using. We thus give a proof in \Cref{sec:appendix}, based on the standard proof of triangle inequality for Orlicz norms.
\begin{lemma} \label{lem:sub_inequality}
Let us have real random variables $X_1, \cdots, X_k$. It holds
\[
\sigma_{se}[\sum_{i=1}^k X_i] \leq 3 \sum_{i=1}^k \sigma_{se}[X_i]
\]
\end{lemma}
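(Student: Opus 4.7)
The plan is to route the claim through an Orlicz-type norm, which linearizes the triangle inequality. Define
\[
\|X\| = \inf\bigl\{\,c > 0 : \mathbb{E}[e^{|X|/c}] \leq 2\,\bigr\}.
\]
I will first show $\sigma_{se}[X] \leq \|X\| \leq 3\,\sigma_{se}[X]$, so that the norm and the subexponential diameter are equivalent up to a factor of $3$, and then prove the triangle inequality for $\|\cdot\|$. Chaining these two facts yields the lemma.

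For the upper bound $\|X\|\leq 3\sigma_{se}[X]$, write $\sigma = \sigma_{se}[X]$ and compute, using the layer-cake formula and the substitution $t = e^s$,
\[
\mathbb{E}[e^{|X|/c}] = 1 + \int_0^\infty \mathbb{P}[|X| \geq cs]\, e^s\, ds \leq 1 + 2\int_0^\infty e^{s(1-c/\sigma)}\, ds.
\]
For $c \geq 3\sigma$ the right-hand side evaluates to at most $1 + 2\sigma/(c-\sigma) \leq 2$, giving $\|X\| \leq 3\sigma_{se}[X]$. For the lower bound, if $\mathbb{E}[e^{|X|/c}] \leq 2$ then Markov's inequality gives $\mathbb{P}[|X| \geq t] \leq 2 e^{-t/c}$, so $\sigma_{se}[X] \leq c$; taking the infimum yields $\sigma_{se}[X] \leq \|X\|$.

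Next, the triangle inequality $\|X_1 + X_2\| \leq \|X_1\| + \|X_2\|$ follows the standard Orlicz-norm argument. Set $a = \|X_1\|$, $b = \|X_2\|$ and $\lambda = a/(a+b)$. Using $|X_1 + X_2| \leq |X_1| + |X_2|$ and the weighted AM--GM inequality applied to the convex function $e^x$,
\[
e^{|X_1+X_2|/(a+b)} \leq \bigl(e^{|X_1|/a}\bigr)^{\lambda}\bigl(e^{|X_2|/b}\bigr)^{1-\lambda} \leq \lambda\, e^{|X_1|/a} + (1-\lambda)\, e^{|X_2|/b}.
\]
Taking expectations, the right-hand side is at most $2\lambda + 2(1-\lambda) = 2$, so $\|X_1+X_2\| \leq a+b$. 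Iterating gives $\|\sum_{i=1}^k X_i\| \leq \sum_{i=1}^k \|X_i\|$.

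Combining the pieces,
\[
\sigma_{se}\Bigl[\sum_{i=1}^k X_i\Bigr] \;\leq\; \Bigl\|\sum_{i=1}^k X_i\Bigr\| \;\leq\; \sum_{i=1}^k \|X_i\| \;\leq\; 3\sum_{i=1}^k \sigma_{se}[X_i].
\]
The only nontrivial step is verifying the constant $3$ in the inequality $\|X\| \leq 3\,\sigma_{se}[X]$: this is exactly where the particular normalization of $\sigma_{se}$ used in the paper enters, and one has to be careful that the integral $\int_0^\infty e^{s(1-c/\sigma)} ds$ actually converges and is bounded by $1/2$ at $c = 3\sigma$. Everything else is routine.
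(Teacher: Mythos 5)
Your proof is correct and is essentially the same argument as the paper's, which is explicitly stated to be ``based on the standard proof of triangle inequality for Orlicz norms'': the paper's Jensen step on the convex combination $\sum_i \frac{\sigma_{se}[X_i]}{\sum_j \sigma_{se}[X_j]}\cdot\frac{|X_i|}{3\sigma_{se}[X_i]}$ is precisely your triangle inequality for $\|\cdot\|$, and its invocation of \Cref{lem:exponential_expectations_bound} plays the role of your bound $\|X\|\leq 3\,\sigma_{se}[X]$. The only cosmetic differences are that you make the intermediate Orlicz norm explicit rather than inlining it, and you re-derive the exponential-moment bound via the layer-cake formula instead of reusing \Cref{lem:exponential_expectations_bound}.
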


\section{Algorithms with subexponential error} \label{sec:subexponential}
In this section, we show how algorithms, whose error has a small subexponential diameter, can be made differentially private.
We start by proving a technical lemma. We will later bound the ratios between probability densities of our mechanism's answers by the exponential expectations that we now bound and, finally, we will use that to prove privacy in \Cref{thm:subexponential}, which is the main theorem of this section.

Note that while the second of the two inequalities is standard, the first one is not. Our proof does not follow the strategy of the standard proof of the second inequality, which is based on a Taylor expansion and does not seem to straightforwardly apply to the first inequality.
\begin{lemma} \label{lem:exponential_expectations_bound}
Suppose $X$ is a random variable with subexponential diameter $\Delta \leq 1/2$. It holds $\mathbb{E}[e^{-|X|}] \geq \frac{2^{-\Delta}}{1+\Delta} \geq e^{-(1+\log 2)\Delta}$ and $\mathbb{E}[e^{|X|}] \leq \frac{2^\Delta}{1-\Delta} \leq e^{3 \log (2) \Delta}$.
\end{lemma}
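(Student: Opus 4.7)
The plan is to exploit the quantile representation of $|X|$: since $|X|$ has the same distribution as $F_{|X|}^{-1}(U)$ for $U \sim \mathrm{Unif}(0,1)$, both expectations can be rewritten as one-dimensional integrals over $[0,1]$. This sidesteps the fact that $X$ need not possess a density. The assumption $\mathbb{P}[|X| > t] \leq 2 e^{-t/\Delta}$ translates to the pointwise bound $F_{|X|}^{-1}(u) \leq -\Delta \log\!\bigl(\tfrac{1-u}{2}\bigr)$ for all $u \in [0,1)$ (obtained by solving $1-F_{|X|}(t) = 2 e^{-t/\Delta}$ for $t$, and noting the RHS is nonnegative on $[0,1)$ so it dominates the trivial lower bound $F_{|X|}^{-1}(u) \geq 0$).

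With this in hand, the upper bound for $\mathbb{E}[e^{|X|}]$ follows directly:
\begin{align*}
\mathbb{E}[e^{|X|}] = \int_0^1 e^{F_{|X|}^{-1}(u)}\,du \leq \int_0^1 \Bigl(\tfrac{2}{1-u}\Bigr)^{\Delta} du = 2^{\Delta}\int_0^1 v^{-\Delta}\,dv = \frac{2^{\Delta}}{1-\Delta},
\end{align*}
where $v = 1-u$; the integral converges because $\Delta < 1$. For the lower bound, the same inequality gives $-F_{|X|}^{-1}(u) \geq \Delta \log\!\bigl(\tfrac{1-u}{2}\bigr)$, hence $e^{-F_{|X|}^{-1}(u)} \geq \bigl(\tfrac{1-u}{2}\bigr)^{\Delta}$, and integration yields
\begin{align*}
\mathbb{E}[e^{-|X|}] \geq \int_0^1 \Bigl(\tfrac{1-u}{2}\Bigr)^{\Delta} du = \frac{2^{-\Delta}}{1+\Delta}.
\end{align*}

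What remains is to verify the two elementary ``exponentialized'' inequalities. For the first, $\tfrac{2^{-\Delta}}{1+\Delta} \geq e^{-(1+\log 2)\Delta}$ is equivalent after taking logs to $\log(1+\Delta) \leq \Delta$, which holds for all $\Delta \geq 0$. For the second, $\tfrac{2^{\Delta}}{1-\Delta} \leq e^{3\log(2)\Delta}$ reduces to $-\log(1-\Delta) \leq 2\log(2)\,\Delta$; here the hypothesis $\Delta \leq 1/2$ is essential. I would check this by defining $f(\Delta) := 2\log(2)\,\Delta + \log(1-\Delta)$, observing $f(0) = f(1/2) = 0$, and verifying that $f'(\Delta) = 2\log 2 - (1-\Delta)^{-1}$ is positive at $0$ and negative at $1/2$, so that $f$ is unimodal on $[0,1/2]$ and hence nonnegative there. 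The main obstacle is really just keeping track of constants; the only mild subtlety is confirming that the bound $F_{|X|}^{-1}(u) \leq -\Delta\log((1-u)/2)$ is genuinely nonnegative on all of $[0,1)$ (so no truncation is needed), which follows because $(1-u)/2 \leq 1/2 < 1$ throughout.
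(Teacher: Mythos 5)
Your proposal is correct and follows essentially the same route as the paper: rewrite both expectations via the quantile representation $|X| \stackrel{d}{=} F_{|X|}^{-1}(U)$, bound $F_{|X|}^{-1}(u) \leq -\Delta\log\!\bigl(\tfrac{1-u}{2}\bigr)$ from the subexponential tail, integrate in closed form, and finish with elementary concavity/endpoint checks for the two scalar inequalities. The only cosmetic difference is in the final step for $\tfrac{2^\Delta}{1-\Delta} \leq e^{3\log(2)\Delta}$: the paper shows $h(\Delta) = 4^\Delta(1-\Delta)$ is concave and checks $h(0), h(1/2) \geq 1$, whereas you take logs and argue about $f = \log h$; these are the same argument, though you should note explicitly that $f'' = -(1-\Delta)^{-2} < 0$ to justify the claimed unimodality, rather than inferring it only from the signs of $f'$ at the endpoints.
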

\begin{proof}
Since $X$ is subexponential with diameter $\Delta$, it holds that $\mathbb{P}[|X| \geq z] \leq 2e^{-z/\Delta}$. Therefore, $F^{-1}_{|X|}(u) \leq -\Delta\log(\frac{1-u}{2})$. We use the fact that for $u \sim Unif(0,1)$, the random variable $F_{|X|}^{-1}(u)$ has the same distribution as $|X|$. We can now bound
\begin{align*}
\mathbb{E}[e^{-|X|}] =& E_u[e^{-F^{-1}_{|X|}(u)}] \\\geq& E_u[e^{\Delta\log(\frac{1-u}{2})}] \stepcounter{equation}\tag{\theequation}\label{eq:this_i_simplify} \\=&  2^{-\Delta} E_u[(1-u)^{\Delta}] \\=& 2^{-\Delta}\int_0^1 (1-u)^{\Delta} du \\=& 2^{-\Delta}\left[-\frac{(1-u)^{\Delta+1}}{\Delta+1}\right]_{u=0}^1 \\=& \frac{2^{-\Delta}}{1+\Delta} \geq e^{-(1+\log 2)\Delta}
\end{align*}
where the last inequality holds because we can equivalently write $2^{-\Delta}/(\Delta + 1) \geq (2 e)^{-\Delta}$, which simplifies to $e^\Delta \geq 1+\Delta$, which is a standard inequality. Similarly, we can bound $\mathbb{E}[e^{|X|}]$ as follows.
\begin{align*}
\mathbb{E}[e^{|X|}] =& E_u[e^{F^{-1}_{|X|}(u)}] \\\leq& E_u[e^{-\Delta\log(\frac{1-u}{2})}] \\=& 
\frac{2^\Delta}{1-\Delta}\leq e^{3 \log (2) \Delta}
\end{align*}
where the second equality is by substituting $\Delta$ with $-\Delta$ in \eqref{eq:this_i_simplify} (since as we have shown, \eqref{eq:this_i_simplify} is equal to $2^{-\Delta}/(1+\Delta)$).
We have here used that $\Delta < 1$ (otherwise the final expression may not be defined).
The last inequality can be shown as follows: we take the ratio of the two sides resulting in $h(\Delta) = 4^\Delta(1-\Delta)$ and we show $h(\Delta) \geq 1$ for $0 \leq \Delta \leq 1/2$. The function $h$ is concave (the second derivative is $-2 \cdot 2^{2\Delta} \log 4 + 2^{2\Delta} \log^2 4$, which can be easily seen to be negative), meaning that it is sufficient to check that the inequality holds at the endpoints of the interval $[0,1/2]$: that $h(0) \geq 1$ and $h(1/2) \geq 1$. One can easily check this holds. 
\end{proof}

We are now ready to prove a lemma about the ratio of the density function of a Laplace and of Laplace shifted by a subexponential random variable. We will then use this to prove differential privacy. Note that the random variables in the lemma do not have to be independent.
\begin{lemma} \label{lem:subexponential}
Let $X_1, \cdots, X_k$ be random variables with subexponential diameter at most $\Delta$ and let $X = (X_1, \cdots, X_k)$. Let $Y \sim Lap^{\otimes k}(k \Delta/\epsilon)$ for $\epsilon \leq 1/6$. Consider $y \in \mathbb{R}^k$. It holds $e^{- 3(1+\log 2) \epsilon} \leq f_{X+Y}(y)/f_Y(y) \leq e^{9\log(2) \epsilon}$. Moreover, if $k=1$ and $\epsilon \leq 1/2$, it holds $e^{- (1+\log 2) \epsilon} \leq f_{X+Y}(y)/f_Y(x) \leq e^{3\log(2) \epsilon}$.
\end{lemma}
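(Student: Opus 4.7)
The plan is to follow the two-step strategy outlined informally in the introduction: first use Fact~5 to rewrite $f_{X+Y}(y)$ as an expectation over $X$ of the Laplace density evaluated at $y-X$, then reduce the resulting ratio to exponential expectations that can be controlled by Lemma~\ref{lem:exponential_expectations_bound}.

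More concretely, first I write $Y = (Y_1,\ldots,Y_k)$ with i.i.d.\ components $Y_i \sim \text{Lap}(k\Delta/\epsilon)$, so that $f_Y(z) = \bigl(\frac{\epsilon}{2k\Delta}\bigr)^k \exp\!\bigl(-\frac{\epsilon}{k\Delta}\|z\|_1\bigr)$. By Fact~5, $f_{X+Y}(y) = \mathbb{E}[f_Y(y-X)]$, and dividing by $f_Y(y)$ the common prefactor cancels, leaving
\begin{equation*}
\frac{f_{X+Y}(y)}{f_Y(y)} \;=\; \mathbb{E}\Bigl[\exp\!\Bigl(-\tfrac{\epsilon}{k\Delta}\bigl(\|y-X\|_1 - \|y\|_1\bigr)\Bigr)\Bigr].
\end{equation*}
Next I apply the $\ell_1$ triangle inequalities $\|y-X\|_1 \le \|y\|_1 + \|X\|_1$ and $\|y-X\|_1 \ge \|y\|_1 - \|X\|_1$ to sandwich this ratio between $\mathbb{E}\bigl[\exp(-\tfrac{\epsilon}{k\Delta}\|X\|_1)\bigr]$ and $\mathbb{E}\bigl[\exp(\tfrac{\epsilon}{k\Delta}\|X\|_1)\bigr]$.

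Now I need to control these two expectations. Observe that $\sigma_{se}[|X_i|] \le \sigma_{se}[X_i] \le \Delta$ (the definition of subexponential diameter already uses $|X_i|$), and Lemma~\ref{lem:sub_inequality} gives $\sigma_{se}[\|X\|_1] = \sigma_{se}[\sum_i |X_i|] \le 3k\Delta$. Scaling by $\epsilon/(k\Delta)$ yields a random variable with subexponential diameter at most $3\epsilon$, which is $\le 1/2$ under the assumption $\epsilon \le 1/6$. Hence Lemma~\ref{lem:exponential_expectations_bound} applies with parameter $3\epsilon$, giving
\begin{equation*}
e^{-(1+\log 2)\cdot 3\epsilon} \;\le\; \mathbb{E}\bigl[e^{-\epsilon\|X\|_1/(k\Delta)}\bigr], \qquad \mathbb{E}\bigl[e^{\epsilon\|X\|_1/(k\Delta)}\bigr] \;\le\; e^{3\log 2 \cdot 3\epsilon},
\end{equation*}
which is exactly the multivariate bound claimed. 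For the $k=1$ addendum, since no triangle inequality over coordinates is needed, $\sigma_{se}[|X_1|] \le \Delta$ directly, so the scaled variable has subexponential diameter at most $\epsilon$ (now using only $\epsilon \le 1/2$), and Lemma~\ref{lem:exponential_expectations_bound} applied with parameter $\epsilon$ gives the sharper bounds without the factor of $3$.

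I do not expect any serious obstacle here: the technical work has been front-loaded into Lemma~\ref{lem:exponential_expectations_bound} and Lemma~\ref{lem:sub_inequality}, and the rest is algebra. The only point that requires a small amount of care is justifying that Fact~5 applies as written to the multivariate Laplace (so that the density identity and the product structure line up), and making sure that the regime of $\epsilon$ in Lemma~\ref{lem:exponential_expectations_bound} is respected after scaling by the factor $3$ coming from Lemma~\ref{lem:sub_inequality}, which is precisely why the hypothesis $\epsilon \le 1/6$ (and not $\epsilon \le 1/2$) is needed in the general $k$ case.
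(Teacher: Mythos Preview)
Your proposal is correct and follows essentially the same route as the paper's proof: rewrite $f_{X+Y}(y)$ via Fact~\ref{fact:important_identity}, use the $\ell_1$ triangle inequalities to sandwich the ratio between $\mathbb{E}[e^{\pm\epsilon\|X\|_1/(k\Delta)}]$, bound $\sigma_{se}[\|X\|_1]\le 3k\Delta$ via Lemma~\ref{lem:sub_inequality}, and then apply Lemma~\ref{lem:exponential_expectations_bound} with parameter $3\epsilon$ (or $\epsilon$ when $k=1$). Your identification of why the hypothesis $\epsilon\le 1/6$ is needed in the general case is exactly the paper's reasoning.
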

\begin{proof}
By \Cref{fact:important_identity}, we have
\[
f_{X+Y}(y) = \mathbb{E}[f_Y(y-X)] = \left(\frac{\epsilon}{2k\Delta}\right)^k\mathbb{E}[\exp(-\frac{\epsilon}{k\Delta} \|X-y\|_1)]
\]
For the sake of brevity, we let $\gamma = \left(\frac{\epsilon}{2k\Delta}\right)^k$. We may bound $\|X-y\|_1 \leq \|X\|_1 + \|y\|_1$ and $\|X-y\|_1 \geq \|y\|_1 - \|X\|_1$. This allows us to bound 
\begin{align*}
f_{X+Y}(y) =& \gamma E\left[\exp\left(\frac{-\epsilon \|X-y\|_1}{k\Delta}\right)\right] \\\geq& \gamma E\left[\exp\left(\frac{-\epsilon (\|X\|_1 + \|y\|_1)}{k\Delta}\right)\right] \\=& \gamma \exp\left(\frac{-\epsilon \|y\|_1}{k\Delta}\right) E\left[\exp\left(\frac{-\epsilon \|X\|_1}{k\Delta}\right)\right] \stepcounter{equation}\tag{\theequation}\label{eq:this_i_want_to_bound}\\\geq& \gamma \exp\left(\frac{-\epsilon \|y\|_1}{k\Delta}\right) \exp(-3(1+\log 2)\epsilon) 
\end{align*}
where the last inequality is by \Cref{lem:exponential_expectations_bound}; we used that $\frac{\epsilon \|X\|_1}{k\Delta}$ has subexponential diameter $\leq 3\epsilon \leq 1/2$, since we have by \Cref{lem:sub_inequality} that $\sigma_{se}[\|X\|] \leq 3 k \Delta$. In the case $k=1$, we simply have that $\sigma_{se}[\frac{\epsilon \|X\|_1}{k\Delta}] \leq \epsilon$, in which case the final bound on \eqref{eq:this_i_want_to_bound} is 
\[
\geq \gamma \exp\left(\frac{-\epsilon \|y\|_1}{k\Delta}\right) \exp(-(1+\log 2)\epsilon) 
\]

At the same time, $f_Y(y) = \gamma \exp(-\frac{\epsilon \|y\|_1}{k\Delta})$ and thus
\begin{align*}
\frac{f_{X+Y}(y)}{f_Y(y)} \geq& e^{-3(1+\log 2)\epsilon}& \text{ if }k>1\\
\frac{f_{X+Y}(y)}{f_Y(y)} \geq& e^{-(1+\log 2)\epsilon}& \text{ if }k=1
\end{align*}
Similarly, we can bound
\begin{align*}
f_{X+Y}(y) =& \gamma E\left[\exp\left(\frac{-\epsilon \|X-y\|_1}{k\Delta}\right)\right] \\\leq& \gamma E\left[\exp\left(\frac{-\epsilon (\|y\|_1-\|X\|_1)}{k \Delta}\right)\right] \\=& \gamma \exp\left(\frac{-\epsilon \|y\|_1}{k\Delta}\right) E\left[\exp\left(\frac{\epsilon \|X\|_1}{k\Delta}\right)\right] \\\leq& \gamma \exp\left(\frac{-\epsilon \|y\|_1}{k\Delta}\right) \exp(9\log(2) \epsilon)
\end{align*}
in the case $k>1$ and
\[
f_{X+Y}(y) \leq \gamma \exp\left(\frac{-\epsilon \|y\|_1}{k\Delta}\right) \exp(3\log(2) \epsilon)
\]
in the case $k=1$. Thus, we have 
\begin{align*}
\frac{f_{X+Y}(y)}{f_Y(y)} \leq& e^{9\log(2) \epsilon} & \text{ if }k>1\\
\frac{f_{X+Y}(y)}{f_Y(y)} \leq& e^{3\log(2) \epsilon} & \text{ if }k=1
\end{align*}
\end{proof}

%

We now prove a lemma that says that in general, if we have a countable number of mechanisms and releasing any fixed $k$ of them is differentially private, then picking the mechanisms adaptively will not violate differential privacy. The proof roughly follows the outline of the proof of adaptive composition \cite{rogers2016privacy}. In what follows, we again use the subscript $\cdot\,_r$ to denote a random bitstring used as the source of randomness of the mechanisms. {\color{red} NOTE TO REVIEWERS: The reviewers may wish to skip this proof and go to page 11 instead as the following proof technique is quite standard. On page 11, we finish the proof of the main theorem.}


\begin{lemma} \label{lem:adaptive}
Let us have a countable number of mechanisms $M_{1,r}, \cdots$%
, such that releasing the value $(M_{i_1,r}(D), \cdots, M_{i_k,r}(D))$ is $\epsilon$-differentially private for any fixed $i_1, \cdots, i_k \in [n]$. 
Then the mechanism $(M_{j_1,r}(D), \cdots, M_{j_k,r}(D))$ is $\epsilon$-differentially private for $j_1, \cdots, j_k \in [n]$ such that $j_\ell$ for $\ell \in [k]$ is drawn from a distribution which is a function of $M_{j_1,r}(D), \cdots, M_{j_{\ell-1},r}(D)$.
\end{lemma}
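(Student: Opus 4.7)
The plan is to mimic the standard argument for adaptive composition, adapted to the setting in which the mechanisms share randomness. Rather than proving privacy of the tuple $(M_{j_1,r}(D),\ldots,M_{j_k,r}(D))$ directly, I would prove privacy of the \emph{augmented} transcript $T = (j_1, a_1, j_2, a_2, \ldots, j_k, a_k)$, where $a_\ell = M_{j_\ell,r}(D)$. Since the map that discards the indices is a data-independent post-processing, privacy of $T$ immediately implies the statement of the lemma.

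The crucial observation is that each index $j_\ell$ is drawn from a distribution $\pi_\ell(\cdot \mid a_1,\ldots,a_{\ell-1})$ that depends only on the previously released outputs and not on $D$. For any measurable set $S$ in the transcript space, I would decompose
\begin{align*}
\mathbb{P}[T \in S \mid D] = \sum_{i_1,\ldots,i_k} \int_{S_{i_1,\ldots,i_k}} \prod_{\ell=1}^k \pi_\ell(i_\ell \mid a_1,\ldots,a_{\ell-1}) \, d\mu_{i_1,\ldots,i_k}(a_1,\ldots,a_k \mid D),
\end{align*}
where $S_{i_1,\ldots,i_k}$ is the slice of $S$ at the fixed index tuple $(i_1,\ldots,i_k)$ and $\mu_{i_1,\ldots,i_k}(\cdot \mid D)$ is the joint distribution of $(M_{i_1,r}(D),\ldots,M_{i_k,r}(D))$, with the shared randomness $r$ already marginalized out. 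By hypothesis, for every fixed index tuple the measure $\mu_{i_1,\ldots,i_k}(\cdot \mid D)$ is $\epsilon$-close to $\mu_{i_1,\ldots,i_k}(\cdot \mid D')$ in the differential-privacy sense. The weights $\prod_\ell \pi_\ell(i_\ell \mid a_1,\ldots,a_{\ell-1})$ form a nonnegative measurable function of the outputs that does not depend on $D$, so applying the DP inequality inside each integral and then summing over the countably many index tuples yields $\mathbb{P}[T \in S \mid D] \leq e^\epsilon\,\mathbb{P}[T \in S \mid D']$; the matching lower bound is symmetric. Applying the projection $(j_1,a_1,\ldots,j_k,a_k) \mapsto (a_1,\ldots,a_k)$ as a post-processing then completes the proof.

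The main obstacle is measure-theoretic bookkeeping rather than any genuinely new idea: one has to check that the slices $S_{i_1,\ldots,i_k}$ are measurable, that $\pi_\ell$ can be taken as a regular conditional distribution so the integrand is jointly measurable in $(a_1,\ldots,a_k)$, and that the hypothesis \textemdash\ phrased for measurable subsets of the output space \textemdash\ transfers to the weighted integrals above. Countability of the index set ensures there are no subtle summation issues. Once these routine points are dispatched, the decomposition and the index-wise DP hypothesis are what do all the work, exactly mirroring the Rogers--Roth--Smith--Thakurta adaptive composition proof.
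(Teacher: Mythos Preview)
Your proposal is correct and follows essentially the same approach as the paper: both arguments decompose the adaptive transcript by the realized index tuple $(i_1,\ldots,i_k)$, observe that on each piece the indices are fixed so the non-adaptive hypothesis applies, and then sum over the countably many tuples. The only cosmetic difference is that the paper first conditions on the adversary's (discretized) coin flips and partitions the output space into equivalence classes determined by the resulting index sequence, whereas you marginalize over the adversary's randomness directly via the kernels $\pi_\ell$ and invoke the extension of the DP inequality from sets to nonnegative measurable integrands; both handle the same measure-theoretic bookkeeping you flag.
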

\begin{proof}
Let $V$ be the adversary selecting the values of $j_\ell$. At the end, $V$ releases $M_{j_1,r}(D), \cdots, M_{j_k,r}(D)$; we prove $V$ is differentially private. At step $\ell$, $V$ picks a distribution as a function of $M_{j_1,r}(D), \cdots, M_{j_{\ell-1},r}(D)$ and then it samples $j_\ell$ from that distribution. We assume that it does this sampling by performing independent unbiased coin flips, and using them to simulate the distribution in question by a standard approach which uses with probability $1$ finite number of flips. We call the sequence of the coin flips $r$. Since the length of $r$ is finite almost surely (one can sample from a countable-support distribution while almost surely using finitely many bits), the distribution of $r$ has countable support up to a set of measure zero; this will help us ensure that we have no issues with measurability. The value of $j_\ell$ then depends deterministically on $r$ and the values $M_{j_{1},r}, \cdots, M_{j_{\ell-1},r}$. If $V$ releases some auxiliary information together with the values $M_{j_1,r}(D), \cdots, M_{j_{\ell-1},r}(D)$, it is only less likely to be private, and we may thus assume that $V$ also releases $r$.

Suppose we fix the randomness $r$ and let $U$ be the range of the mechanisms (if the ranges differ, we set $U$ to be their union). We define an equivalence relation on $U^k$ as follows. We say that $u,v \in U^k$ are equivalent iff they result in the same sequence of $j_1, \cdots, j_k$ (for the fixed value of $r$). Let $s_1(r), s_2(r), \cdots$ be the equivalence classes under of this equivalence relation. The sequence $j_1, \cdots, j_k$ can then be seen as a function of $r$ and of $s_h(r)$ for $h \in \mathbb{N}$. The number of equivalence classes is countable for any $r$. 

Let us have two neighboring databases $D_1 \sim D_2$. For any measurable $Y\subseteq U$ and $t \in \{0,1\}^*$, we can now bound the ratio
{\small
\begin{align*}
\frac{P(r = t \wedge (M_{j_1,r}(D_1), \cdots, M_{j_k,r}(D_1)) \in Y)}{P(r = t \wedge (M_{j'_1,r}(D_2), \cdots, M_{j'_k,r}(D_2)) \in Y)} =& \frac{P(r = t)P((M_{j_1,r}(D_1), \cdots, M_{j_k,r}(D_1)) \in Y | P(r = t))}{P(r = t) P((M_{j'_1,r}(D_2), \cdots, M_{j'_k,r}(D_2)) \in Y| P(r = t))} \\=& \frac{\sum_{h=1}^{\infty} P((M_{j_1,r}(D_1), \cdots, M_{j_k,r}(D_1)) \in Y\cap s_h(t) | P(r = t))}{\sum_{h=1}^{\infty}P((M_{j'_1,r}(D_2), \cdots, M_{j'_k,r}(D_2)) \in Y\cap s_h(t)| P(r = t))} = (*)
\end{align*}
}
The values $j_\ell,j'_\ell$ in $P((M_{j_1,r}(D), \cdots, M_{j_k,r}) \in Y\cap s_h | P(r = t))$ are deterministic as they are a function of $r$ and $s_h$, as noted above. As such, it holds $j_\ell = j'_\ell$. Therefore, by the assumption that $(M_{i_1,r}(D), \cdots, M_{i_k,r}(D))$ is $\epsilon$-differentially private for any fixed $i_1, \cdots, i_k$, it holds that
\[
\frac{ P((M_{j_1,r}(D_1), \cdots, M_{j_k,r}(D_1)) \in Y\cap s_h | P(r = t))}{P((M_{j'_1,r}(D_2), \cdots, M_{j'_k,r}(D_2)) \in Y\cap s_h| P(r = t))} \leq e^{\epsilon}
\]
which allows us to bound $(*)$ as
\[
(*) \leq \frac{\sum_{h=1}^{\infty} e^{\epsilon }P((M_{j'_1,r}(D_2), \cdots, M_{j'_k,r}(D_2)) \in Y\cap s_h| P(r = t))}{\sum_{h=1}^{\infty}P((M_{j'_1,r}(D_2), \cdots, M_{j'_k,r}(D_2)) \in Y\cap s_h| P(r = t))} = e ^ \epsilon
\]
The other inequality holds by symmetry. This proves that $V$ is $\epsilon$-differentially private, as we wanted to prove.
\end{proof}

We are now ready to prove the main theorem of this section. In what follows, we denote by $A_r(\cdot)$ the algorithm $A$ executed with randomness $r$. We formalize the multiple-query setting as having one algorithm which takes as part of its input a query. This differs somewhat from the presentation in the introduction which assumed we have a sequence of algorithms, which we chose there as it required less notation. Note also that while we are not assuming that the algorithm does not know which phase it is (the value of $i$), we may without loss of generality assume this is passed as part of the query.
%
Note that the condition on the queries $x_1, \cdots, x_k$ below simply states that the queries can be chosen adaptively based on the released values, and that they do not have to be deterministic.
Note also that the randomness $r$ must not be released, as the privacy also relies on that randomness.

\begin{theorem} \label{thm:subexponential}
Let us have an algorithm $A(D,x)$ for a database $D$ and a query $x \in U$, where $U$ is a countable universe. Assume there exists a function $g(D,x)$ with its global sensitivity w.r.t.\ $D$ being $\leq \Delta_1$ for any $x$, and such that $\sigma_{se}[A(D,x) - g(D,x)] \leq \Delta_2$ for any $D,x$.

Pick at random $Y_i \sim \text{Laplace}(c(\Delta_1 +  \Delta_2) k /\epsilon))$ for $c = 3+12\log 2$ and for $\epsilon \leq c/6$ and pick $r$ independently uniformly on $\{0,1\}^\infty$. Then for queries $x_1, \cdots, x_k \in U$ where the query $x_i$ is drawn from a distribution that is a function of $A_r(D,x_1) + Y_1, \cdots, A_r(D,x_{i-1}) + Y_{i-1}$, releasing $(A_r(D,x_1) + Y_1, \cdots, A_r(D,x_k) + Y_k)$ is $\epsilon$-differentially private, with the privacy also being over the randomness of $r$.

If $k=1$, then $c = 1+4\log 2$ and $\epsilon \leq c/2$ is sufficient.
\end{theorem}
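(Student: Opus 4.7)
My plan is to first prove the theorem for a fixed sequence of queries $x_1, \ldots, x_k$ and then lift the result to adaptive queries via \Cref{lem:adaptive}. Since $U$ is countable, I model the family $\{M_x\}_{x \in U}$ of per-query mechanisms --- each $M_x$ using the shared algorithmic randomness $r$ together with a fresh Laplace sample drawn at every invocation --- as a countable collection of randomized mechanisms to which \Cref{lem:adaptive} applies directly.

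For the non-adaptive part, fix neighboring $D_1 \sim D_2$ and queries $x_1, \ldots, x_k$. Define the error vectors $R^{(1)}_i = A_r(D_1, x_i) - g(D_1, x_i)$ and, asymmetrically, $R^{(2)}_i = A_r(D_2, x_i) - g(D_1, x_i)$. By hypothesis, each coordinate of $R^{(1)}$ has subexponential diameter $\leq \Delta_2$. I further decompose $R^{(2)}_i = Z_i + b_i$, where $Z_i = A_r(D_2, x_i) - g(D_2, x_i)$ satisfies $\sigma_{se}[Z_i] \leq \Delta_2$ and $b_i = g(D_2, x_i) - g(D_1, x_i)$ is a deterministic shift with $|b_i| \leq \Delta_1$. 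Setting $y' = y - (g(D_1, x_i))_i$ and combining \Cref{fact:important_identity} with the translation identity $f_{Z+b+Y}(y') = f_{Z+Y}(y'-b)$, the target density ratio factors as
\begin{equation*}
\frac{f_{A_r(D_1,\cdot)+Y}(y)}{f_{A_r(D_2,\cdot)+Y}(y)} = \frac{f_{R^{(1)}+Y}(y')}{f_Y(y')} \cdot \frac{f_Y(y')}{f_Y(y'-b)} \cdot \frac{f_Y(y'-b)}{f_{Z+Y}(y'-b)}.
\end{equation*}

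The first and third factors are bounded by \Cref{lem:subexponential} applied with the tight choice $\Delta = \Delta_2$, which --- because our Laplace scale is $s = ck(\Delta_1+\Delta_2)/\epsilon$ --- corresponds to the effective parameter $\epsilon'' = \Delta_2\epsilon/(c(\Delta_1+\Delta_2))$; the lemma gives a combined worst-case upper bound of $\exp((3+12\log 2)\epsilon'')$. The middle factor is a ratio of product-Laplace densities at a deterministic shift $b$ of $\ell_1$-length $\leq k\Delta_1$, hence bounded by $\exp(k\Delta_1/s) = \exp(\Delta_1\epsilon/(c(\Delta_1+\Delta_2)))$. Multiplying yields an overall ratio of at most $\exp\bigl(\epsilon \cdot ((3+12\log 2)\Delta_2 + \Delta_1)/(c(\Delta_1+\Delta_2))\bigr) \leq \exp(\epsilon)$ for $c = 3+12\log 2$, since $(3+12\log 2)\Delta_2 + \Delta_1 \leq (3+12\log 2)(\Delta_1+\Delta_2)$. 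The restriction $\epsilon'' \leq 1/6$ required by \Cref{lem:subexponential} translates into $\epsilon \leq c/6$, and the lower bound on the ratio follows by swapping the roles of $D_1$ and $D_2$. For $k=1$ the sharper one-dimensional estimates in \Cref{lem:subexponential} replace $3+12\log 2$ by $1+4\log 2$, with the relaxed condition $\epsilon \leq c/2$.

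The main obstacle is avoiding an unnecessary $1/\log 2$ blow-up on the $R^{(2)}$ side. Absorbing the shift $b$ into the subexponential diameter via \Cref{fact:adding_constant} would enlarge the effective $\Delta$ from $\Delta_2$ to $\Delta_2 + \Delta_1/\log 2$ and ultimately force $c$ to scale roughly like $1/\log 2$ times what is claimed; pulling the shift out as a separate Laplace-density factor is the trick that lets \Cref{lem:subexponential} be applied with the original $\Delta_2$ and makes the constants in the theorem tight. Once the fixed-query case is in hand, \Cref{lem:adaptive} applies cleanly: every fixed sequence of $k$ queries (including repetitions, which are harmless because each invocation uses fresh noise) gives an $\epsilon$-differentially private release, and the lemma upgrades this to the adaptive guarantee, with the privacy also being over the shared randomness $r$ as required.
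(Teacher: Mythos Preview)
Your proof is correct and follows the paper's overall architecture---establish the density-ratio bound for fixed queries via \Cref{lem:subexponential}, then lift to adaptive queries via \Cref{lem:adaptive}---but with a genuinely different decomposition at the core step. The paper absorbs the deterministic shift $b = g(D_2,\cdot) - g(D_1,\cdot)$ into $R_2$ using \Cref{fact:adding_constant}, obtaining $\sigma_{se}[R_{2,i}] \le \Delta_2 + \Delta_1/\log 2$, and then applies \Cref{lem:subexponential} directly to both $R_1$ and $R_2$. You instead peel $b$ off as a pure Laplace-density factor $f_Y(y')/f_Y(y'-b)$ and apply \Cref{lem:subexponential} only to the centered errors $R^{(1)}$ and $Z$, each with coordinate diameter $\le \Delta_2$. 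Your three-factor route is the sharper one with respect to constants: the shift contributes only $\|b\|_1/s \le \Delta_1\epsilon/(c(\Delta_1+\Delta_2))$ in the exponent, so $c = 3+12\log 2$ (resp.\ $1+4\log 2$ for $k=1$) drops out exactly, whereas pushing the $\Delta_1/\log 2$ inflation from \Cref{fact:adding_constant} through \Cref{lem:subexponential}, as the paper's write-up does, actually yields a slightly larger constant than the one stated in the theorem. In short, the paper's two-factor argument is structurally simpler but looser, while yours matches the claimed constants cleanly; your observation that absorbing $b$ is the step to avoid is exactly right. Your invocation of \Cref{lem:adaptive} (countable family indexed by $U$, fresh noise per position) is the same as the paper's.
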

\begin{proof}
Let us have two neighboring databases $D_1,D_2$. Let $Y = (Y_1, \cdots, Y_k)$, and for $x = (x_1, \cdots, x_k)$, let $A'(D,x) = (A_r(D,x_1) + Y_1, \cdots, A_r(D,x_k) + Y_k)$ for $r$ uniform on $\{0,1\}^\infty$. Similarly, let $g(D, x) = (g(D, x_1), \cdots, g(D,x_k))$. We prove that for any fixed (non-adaptive) queries $x = (x_1, \cdots, x_k$) it holds $f_{A'(D_1,x)}(y)/f_{A'(D_2,x)}(y) \leq e^\epsilon$. If we prove this, the theorem follows: the inequality $f_{A'(D_1,x)}(y)/f_{A'(D_2,x)}(y) \geq e^{-\epsilon}$ holds by symmetry and these bounds together imply $\epsilon$-differential privacy. \Cref{lem:adaptive} then imply that $A'$ is differentially private even for adaptive queries.

Let $R_1 = A'(D_1,x) - g(D_1,x)$ and $R_2 = A'(D_2,x) - g(D_1,x)$ (note the asymmetry in the definitions). We are assuming $R_1$ has subexponential diameter $\leq \Delta_2$. By \Cref{fact:adding_constant}, it holds that $R_2$ has subexponential diameter $\leq \Delta_1/log(2)+\Delta_2$. We now have from \Cref{lem:subexponential} the following bounds
\begin{align*}
\frac{f_{g(D_1,x) + R_1 + Y}(y)}{f_{g(D_1,x) + Y}(y)} = \frac{f_{R_1 + Y}(y-g(D_1,x))}{f_{Y}(y-g(D_1,x))} \leq& e^{9\log(2) \epsilon/c}\\
\frac{f_{g(D_2,x) + R_2 + Y}(y)}{f_{g(D_1,x) + Y}(y)} = \frac{f_{R_2 + Y}(y-g(D_2,x))}{f_{Y}(y-g(D_1,x))} \geq& e^{- 3(1+\log 2) \epsilon/c}
\end{align*}
which in turn allows us to bound
\[
f_{A'(D_1,x)}(y)/f_{A'(D_2,x)}(y) = \frac{f_{g(D_1,x) + R_1 + Y}(x)}{f_{g(D_1,x) + Y}} \cdot \frac{f_{g(D_1) + Y}}{f_{g(D_2) + R_2 + Y}(x)} \leq e^{(3+12\log 2) \epsilon/c} = e^{\epsilon}
\]

If $k=1$, the same computation gives the desired bound for $c = 1+4\log 2$, since \Cref{lem:subexponential} in that case gives gives 
\begin{align*}
\frac{f_{g(D_1,x) + R_1 + Y}(y)}{f_{g(D_1,x) + Y}(y)} \leq& e^{3\log(2) \epsilon/c}\\
\frac{f_{g(D_2,x) + R_2 + Y}(y)}{f_{g(D_1,x) + Y}(y)} \geq& e^{- (1+\log 2) \epsilon/c}
\end{align*}
which again results in the bound $f_{A'(D_1,x)}(y)/f_{A'(D_2,x)}(y) \leq e^\epsilon$.
\end{proof}

\section{Algorithms with bounded mean error}
In this section, we show a weaker version of \Cref{thm:subexponential} that only requires that the error has some number of bounded moments, instead of requiring that it is subexponential. We start by defining the distribution that we will use in our additive noise mechanism.

\begin{definition}
Zero-symmetric Pareto distribution with shape parameter $\alpha>1$ and scale parameter $s>0$, denoted $ZSPareto_\alpha(s)$, is defined by the PDF
\[
\frac{1}{2s} (\alpha-1) (| x|/s +1)^{-\alpha}
\]
\end{definition}
It should be noted that the ``scale parameter" $s$ indeed in some sense represents the scale of the distribution. Specifically, 
\begin{align} \label{eq:pareto_cdf}
\mathbb{P}[|ZSPareto_\alpha(s)| \leq t] = \int_{-t}^t \frac{(\alpha -1) \left(\frac{| x| }{s}+1\right)^{-\alpha }}{2 s} \, dx = 1-\left(\frac{s+t}{s}\right)^{1-\alpha }
\end{align}
which can be made arbitrarily close to $1$ while setting $t = \Theta(s)$, meaning that an arbitrarily large fraction of the probability mass is within $O(s)$ of the origin.

Before we can prove the main theorem of this section, we need to prove a technical lemma. 
\begin{lemma} \label{lem:integral_bound}
Let us have any $0\leq \epsilon\leq 1$, $\alpha > 1$, and $x \geq 0$. It holds
\begin{align} 
\int_0^1 \min\left((1+x)^\alpha,\left|1- \frac{(1-2^{-1/\alpha}) \epsilon}{(1+x)(1-u)^{1/\alpha}}\right|^{-\alpha}\right) du \leq 1+\frac{2\alpha-1}{\alpha-1} \epsilon
\end{align}
\end{lemma}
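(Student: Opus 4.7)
Following the hint in the introduction, my plan is to split the integral at $u_0 := 1 - \epsilon(1+x)^{-\alpha}$. On $[u_0, 1]$ the minimum is at most $(1+x)^\alpha$ and the interval has length $\epsilon(1+x)^{-\alpha}$, so this piece contributes at most $\epsilon$. It remains to show that the integral over $[0, u_0]$ is at most $1 + \frac{\alpha}{\alpha-1}\epsilon$. Writing $b := 1 - 2^{-1/\alpha}$ and $c := b\epsilon/(1+x)$, the expression $\mathrm{stuff} := c/(1-u)^{1/\alpha}$ appearing in the second argument of the minimum satisfies $\mathrm{stuff} \le b \epsilon^{1-1/\alpha} \le b < 1$ on $[0, u_0]$, since $1-u \ge \epsilon(1+x)^{-\alpha}$ and $\epsilon \le 1$. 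Hence $|1 - \mathrm{stuff}|^{-\alpha} = (1 - \mathrm{stuff})^{-\alpha}$ on this interval.

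The key step is to use the convexity of $y \mapsto (1-y)^{-\alpha}$ on $[0, b]$ to obtain the linear chord bound
\[
(1-y)^{-\alpha} \le 1 + \tfrac{y}{b} \quad \text{for } y \in [0, b].
\]
This works because $(1-0)^{-\alpha} = 1$ and $(1-b)^{-\alpha} = 2$, so the chord through these two points is exactly $1 + y/b$, and by convexity the graph lies below this chord on $[0, b]$. The specific choice of $b = 1 - 2^{-1/\alpha}$ in the definition of the constant $(1 - 2^{-1/\alpha})$ appearing in the lemma statement is precisely what makes this chord bound clean.

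Substituting $y = c/(1-u)^{1/\alpha}$ and using $c/b = \epsilon/(1+x)$, we get
\[
\int_0^{u_0} (1 - \mathrm{stuff})^{-\alpha}\, du \;\le\; u_0 + \frac{\epsilon}{1+x} \int_0^{u_0} (1-u)^{-1/\alpha}\, du \;=\; u_0 + \frac{\epsilon}{1+x} \cdot \frac{\alpha}{\alpha - 1}\bigl[1 - (1-u_0)^{(\alpha-1)/\alpha}\bigr].
\]
Dropping the negative term and using $u_0 \le 1$ and $1/(1+x) \le 1$, this is bounded by $1 + \tfrac{\alpha}{\alpha-1}\epsilon$. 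Adding the $\epsilon$ contribution from $[u_0, 1]$ yields the desired bound $1 + \tfrac{2\alpha-1}{\alpha-1}\epsilon$.

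The main obstacle is identifying the right bound on $(1 - \mathrm{stuff})^{-\alpha}$: a naive bound like $(1-b)^{-\alpha} = 2$ is far too loose, since it would produce a $u_0 \cdot 2$ term. What saves us is that $\mathrm{stuff}$ is typically very small (it is proportional to $\epsilon$), so we need a bound that is tight at $y=0$ and degrades linearly; the convex chord bound gives exactly this behaviour, and the constant $1 - 2^{-1/\alpha}$ in the statement of the lemma is engineered so that the chord bound is clean and the resulting integral evaluates in closed form.
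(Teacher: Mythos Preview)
Your proposal is correct and follows essentially the same approach as the paper: split at $u_0 = 1 - \epsilon(1+x)^{-\alpha}$, bound the tail piece by $\epsilon$, and on $[0,u_0]$ use the convex chord inequality $(1-(1-2^{-1/\alpha})b)^{-\alpha} \le 1+b$ for $b\in[0,1]$ (checked at the endpoints $b=0,1$) to reduce to the elementary integral $\int (1-u)^{-1/\alpha}\,du$. Your presentation is in fact slightly cleaner than the paper's, which reaches the same chord inequality via two intermediate substitutions before setting $x=0$; you keep the factor $1/(1+x)$ throughout and drop it only at the end, avoiding that detour.
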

\begin{proof}
We start by proving that for $0 \leq u \leq 1-\epsilon/(1+x)^\alpha$ and $0 < \epsilon <1$, $\alpha > 1$ and $x, \geq 0$ holds that 
\begin{align} \label{eq:inside_of_integral_inequality}
\left|1- \frac{(1-2^{-1/\alpha})\epsilon}{(1+x)(1-u)^{1/\alpha}}\right|^{-\alpha} \leq 1+\frac{\epsilon}{(1-u)^{1/\alpha}}
\end{align}
We express the inequality using $a$, defined by $u = 1-a \frac{\epsilon}{(1+x)^\alpha}$. The condition $u \leq 1-\frac{\epsilon}{(1+x)^\alpha}$ will then be equivalent to $a \geq 1$. It is thus sufficient to prove that for $a \geq 1$ it holds
\[
\left|1- \frac{(1-2^{-1/\alpha})\epsilon^{1-1/\alpha}}{a^{1/\alpha}}\right|^{-\alpha} \leq 1 + \frac{(1+x) \epsilon^{1-1/\alpha}}{a^{1/\alpha}}
\]
The right-hand side increases with $x$ while the left-hand side is independent of it, and we may thus set $x = 0$ (note that we are assuming $x \geq 0$). It holds that $0 \leq \epsilon^{1-1/\alpha}/a^{1/\alpha} \leq 1$ as $a \geq 1$ and $\epsilon <1$. The inside in the absolute value is thus non-negative, and we may ignore the absolute value. It thus suffices to prove
\[
\left(1- \frac{(1-2^{-1/\alpha})\epsilon^{1-1/\alpha}}{a^{1/\alpha} }\right)^{-\alpha} \leq 1 + \frac{\epsilon^{1-1/\alpha}}{a^{1/\alpha}}
\]
As we said, it holds $0 \leq \epsilon^{1-1/\alpha}/a^{1/\alpha} \leq 1$. By further substituting $b = \epsilon^{1-1/\alpha}/a^{1/\alpha}$, proving our inequality reduces to proving
\begin{align} \label{eq:final_inequality}
(1- (1-2^{-1/\alpha})b)^{-\alpha} \leq 1+b
\end{align}
under the condition $0 \leq b \leq 1$. The left-hand side of \eqref{eq:final_inequality} is convex as a function of $b$ on $[0,1]$: the second derivative is
\[
\frac{\left(2^{1/\alpha }-1\right)^2 \alpha  (\alpha +1) \left(1-\left(1-2^{-1/\alpha }\right) b\right)^{-\alpha }}{\left(-2^{1/\alpha }+2^{1/\alpha } b-b\right)^2}
\]
which is non-negative as the denominator $\left(-2^{1/\alpha }+2^{1/\alpha } b-b\right)^2$ is clearly non-negative, while the numerator is a product of $\left(2^{1/\alpha }-1\right)^2$ which is non-negative, of $\alpha  (\alpha +1)$ which is also non-negative, and of $\left(1-\left(1-2^{-1/\alpha }\right) b\right)^{-\alpha }$ which is also non-negative as $b \leq 1$ and thus $\left(1-2^{-1/\alpha }\right) b < 1$.

The right-hand side of \eqref{eq:final_inequality} is an affine function of $b$. It is thus sufficient to prove \eqref{eq:final_inequality} for $b = 0$ and $b = 1$. The inequality clearly holds for $b = 0$. For $b = 1$, we have $(1-(1-2^{-1/\alpha})b)^{-\alpha} = 2 = 1+ b$, thus proving inequality \eqref{eq:inside_of_integral_inequality}.

We now bound the integral. It holds
\begin{align*}
\int_0^1 \min\left((1+x)^\alpha, \left|1- \frac{(1-2^{-1/\alpha})\epsilon}{(1+x)(1-u)^{1/\alpha}}\right|^{-\alpha}\right) du \leq& \int_0^{1-\epsilon/(1+x)^\alpha} \left|1- \frac{(1-2^{-1/\alpha})\epsilon}{\alpha (1+x)(1-u)^{1/\alpha}}\right|^{-\alpha} du \\&+ \frac{\epsilon}{(1+x)^\alpha}(1+x)^\alpha \\\leq& \int_0^1 1+\frac{\epsilon}{(1-u)^{1/\alpha}} du + \epsilon \\=& 1+ \epsilon + \left[\frac{\alpha u^{\frac{\alpha-1}{\alpha}}}{\alpha-1}\right]_{u=0}^1 \epsilon = 1+\frac{2\alpha-1}{\alpha-1}\epsilon
\end{align*}
where the second inequality uses inequality \eqref{eq:inside_of_integral_inequality}.
\end{proof}

In what follows, we use the notation $\|X\|_p$ for a random variable $X$ to denote the $L_p$ norm $\sqrt[p]{\mathbb{E}[|X|^p]}$.
\begin{lemma}
Let $X$ be a real random variable such that $\|X\|_\alpha \leq \Delta$ and let $Y \sim ZSPareto_\alpha(s)$ for $s = (1-2^{-1/\alpha})^{-1} \Delta /\epsilon$ for $0 \leq \epsilon \leq 1$ and $\alpha > 1$. Consider $y \in \mathbb{R}$. It holds $e^{-(1-2^{-1/\alpha})\alpha\epsilon} \leq f_{X+Y}(y)/f_Y(y) \leq e^{\frac{2\alpha-1}{\alpha-1}\epsilon}$.
\end{lemma}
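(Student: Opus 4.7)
The plan is to use Fact~\ref{fact:important_identity} to write the density ratio as a single expectation, then bound it from below by Jensen's inequality and from above via Lemma~\ref{lem:integral_bound}. Since $f_Y(z) = \frac{\alpha-1}{2s}(1+|z|/s)^{-\alpha}$, Fact~\ref{fact:important_identity} gives
\begin{align*}
\frac{f_{X+Y}(y)}{f_Y(y)} \;=\; \mathbb{E}\!\left[\frac{f_Y(y-X)}{f_Y(y)}\right] \;=\; \mathbb{E}\!\left[\left(\frac{A}{1+|y-X|/s}\right)^{\!\alpha}\right], \qquad A := 1+|y|/s,
\end{align*}
and the two bounds then follow by bounding $|y-X|$ from above (for the lower bound) and from below (for the upper bound) inside the expectation.

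For the lower bound, I would use the triangle inequality $|y-X|\le|y|+|X|$ to deduce $\bigl(A/(1+|y-X|/s)\bigr)^\alpha \ge \bigl(1+|X|/(s+|y|)\bigr)^{-\alpha}$. Since $x\mapsto (1+x)^{-\alpha}$ is convex on $[0,\infty)$, Jensen's inequality yields $\mathbb{E}\bigl[(1+|X|/(s+|y|))^{-\alpha}\bigr] \ge \bigl(1+\mathbb{E}[|X|]/(s+|y|)\bigr)^{-\alpha}$. Using $\mathbb{E}[|X|]\le\|X\|_\alpha\le\Delta$ (valid for $\alpha\ge 1$ by monotonicity of $L^p$-norms) together with $s+|y|\ge s = \Delta/\bigl((1-2^{-1/\alpha})\epsilon\bigr)$, this is $\ge \bigl(1+(1-2^{-1/\alpha})\epsilon\bigr)^{-\alpha} \ge e^{-(1-2^{-1/\alpha})\alpha\epsilon}$, where the last inequality uses $(1+x)^{-\alpha}\ge e^{-\alpha x}$ (from $1+x\le e^x$).

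For the upper bound, I would combine the reverse triangle inequalities $|y-X|\ge\max(0,|y|-|X|)$ and $|y-X|\ge|X|-|y|$ (equivalently $|y-X|\ge\bigl||y|-|X|\bigr|$) and, by a short case split on whether $|X|$ lies below or above $|y|$, derive the pointwise bound
\begin{align*}
\left(\frac{A}{1+|y-X|/s}\right)^{\!\alpha} \;\le\; \min\!\left(A^\alpha,\;\left|1-\frac{|X|}{s+|y|}\right|^{-\alpha}\right).
\end{align*}
I would then apply the quantile identity $|X|\stackrel{d}{=}F^{-1}_{|X|}(U)$ with $U\sim\mathrm{Unif}(0,1)$ together with the Markov bound $F^{-1}_{|X|}(u)\le \Delta/(1-u)^{1/\alpha}$ (which follows from $\mathbb{P}[|X|\ge t]\le \Delta^\alpha/t^\alpha$) to replace $|X|$ under the expectation by its deterministic Markov upper bound. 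Substituting $\Delta/(s+|y|) = (1-2^{-1/\alpha})\epsilon/A$ turns the resulting expression into exactly the integral appearing in Lemma~\ref{lem:integral_bound} with $x=|y|/s$, which is bounded by $1+\frac{2\alpha-1}{\alpha-1}\epsilon\le e^{\frac{2\alpha-1}{\alpha-1}\epsilon}$.

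The main obstacle I expect is that the function $t\mapsto \min\bigl(A^\alpha,|1-t/(s+|y|)|^{-\alpha}\bigr)$ is \emph{not} monotonic in $t$: it increases on $[0,|y|]$, is constant $A^\alpha$ on $[|y|,2s+|y|]$, and decreases on $[2s+|y|,\infty)$. Consequently, naively replacing $F^{-1}_{|X|}(u)$ by its Markov upper bound inside the expectation is not justified by monotonicity alone. I expect the cleanest rigorous route is to first pass to the non-decreasing envelope $\tilde\phi(t)=\min\bigl(A^\alpha,(1-t/(s+|y|))^{-\alpha}\bigr)$ (with the convention that $(\cdot)^{-\alpha}=+\infty$ for non-positive argument), apply the quantile substitution to this monotonic upper bound, and then match the resulting integral to Lemma~\ref{lem:integral_bound}, absorbing the small discrepancy over the tail region $\{|X|>s+|y|\}$ (whose $U$-measure is at most $\bigl((1-2^{-1/\alpha})\epsilon/A\bigr)^\alpha$ by Markov) into the slack already present in Lemma~\ref{lem:integral_bound}'s bound.
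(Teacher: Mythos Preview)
Your approach is essentially the paper's: the same density-ratio expression via Fact~\ref{fact:important_identity}, the same triangle-inequality bounds, and the same reduction of the upper bound to Lemma~\ref{lem:integral_bound} through the quantile identity and the Markov tail bound $F^{-1}_{|X|}(u)\le \Delta/(1-u)^{1/\alpha}$. For the lower bound you use Jensen where the paper uses the linearization $(1+x)^{-\alpha}\ge 1-\alpha x$; both give the same final estimate.

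Your monotonicity concern is legitimate and in fact applies to the paper's own argument, which performs the substitution $F^{-1}_{|X|}(u)\mapsto \Delta/(1-u)^{1/\alpha}$ inside $\min\bigl(A^\alpha,|1-\cdot|^{-\alpha}\bigr)$ without comment. Your fix via the non-decreasing envelope $\tilde\phi(t)=\min\bigl(A^\alpha,(1-t/(s+|y|))^{-\alpha}\bigr)$ works cleanly and requires no extra slack beyond what is already in Lemma~\ref{lem:integral_bound}: on $[0,\,1-\epsilon/A^\alpha]$ one has $(1-u)^{1/\alpha}\ge \epsilon^{1/\alpha}/A$, hence $1-\tfrac{(1-2^{-1/\alpha})\epsilon}{A(1-u)^{1/\alpha}}\ge 2^{-1/\alpha}>0$, so $\tilde\phi\bigl(\Delta/(1-u)^{1/\alpha}\bigr)$ coincides with Lemma~\ref{lem:integral_bound}'s integrand there; on $[1-\epsilon/A^\alpha,\,1]$ one has $\tilde\phi\le A^\alpha$, which is exactly how the proof of Lemma~\ref{lem:integral_bound} already handles that interval (contributing $\epsilon$). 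Thus $\int_0^1\tilde\phi\,du\le 1+\tfrac{2\alpha-1}{\alpha-1}\epsilon$ by the same computation, and no separate ``absorption'' step is needed.
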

\begin{proof}
Since $\|X\|_\alpha \leq \Delta$, it holds by the higher-order Chebyshev inequality\footnote{The higher-order Chebyshev inequality states that for $X$ being a real random variable, it holds $\mathbb{P}[|X-\mathbb{E}[X]| \geq t] \leq t^\alpha/\mathbb{E}[|X -  E[X]|^\alpha]$ for any $\alpha \geq 0$.} that $\mathbb{P}[|X| \geq z] \leq \Delta^\alpha/z^\alpha$. Therefore, $F^{-1}_{|X|}(u) \leq \Delta/\sqrt[\alpha]{1-u}$. We will use that $|y-X| \leq |y|+|X|$ and later below also that $|y-X| \geq \max(0,|y|-|X|)$. We start with the simpler case of proving a lower bound.
\begin{align*}
\frac{f_{X+Y}(y)}{f_Y(y)} =& \frac{\mathbb{E}[f_{Y}(y-X)]}{(|y|/s + 1)^{-\alpha}} \stepcounter{equation}\tag{\theequation}\label{eq:second_lemma_1}\\=& \frac{\mathbb{E}[(|y-X|/s + 1)^{-\alpha}]}{(|y|/s + 1)^{-\alpha}} \\\geq& E\left[\left(\frac{|y|/s + |X|/s + 1}{|y|/s + 1}\right)^{-\alpha}\right] \\=& E\left[\left( 1 + \frac{|X|/s}{|y|/s + 1}\right)^{-\alpha}\right] \\\geq& \mathbb{E}[( 1 + |X|/s)^{-\alpha}] \\\geq& \mathbb{E}[1- \alpha|X|/s] 
\\=& 1 - \alpha \mathbb{E}[|X|]/s  \\\geq& 1- \frac{\alpha(1-2^{-1/\alpha}) \Delta }{\Delta} \epsilon \stepcounter{equation}\tag{\theequation}\label{eq:second_lemma_3}\geq e^{-(1-2^{-1/\alpha})\alpha\epsilon}
\end{align*}
where \eqref{eq:second_lemma_1} holds by \Cref{fact:important_identity}, and 
\eqref{eq:second_lemma_3} uses that $\mathbb{E}[|X|] = \|X\|_1 \leq \|X\|_\alpha \leq \Delta$. We now show the upper bound part. We prove an upper bound in terms of the integral which we have bounded in \Cref{lem:integral_bound}.
\begin{align*}
\frac{f_{X+Y}(y)}{f_Y(y)} =&
\frac{\mathbb{E}[(|y-X|/s + 1)^{-\alpha}]}{(|y|/s + 1)^{-\alpha}} \\\leq&
E\left[\min\left((1+|y|/s)^\alpha,\left|\frac{|y|/s - |X|/s + 1}{|y|/s + 1}\right|^{-\alpha}\right)\right] \\=&
E\left[\min\left((1+|y|/s)^\alpha,\left| 1 - \frac{|X|/s}{|y|/s + 1}\right|^{-\alpha}\right)\right] \\=&
E_u\left[\min\left((1+|y|/s)^\alpha,\left| 1 - \frac{F_{|X|}^{\smash{-1}}(u)/s}{|y|/s + 1}\right|^{-\alpha}\right)\right] \\\leq&
E_u\left[\min\left((1+|y|/s)^\alpha,\left| 1 - \frac{(1-2^{-1/\alpha})^{-1}\epsilon}{(|y|/s + 1)(1-u)^{1/\alpha}}\right|^{-\alpha}\right)\right] \\=&
\int_0^1 \min\left((1+|y|/s)^\alpha,\left| 1 - \frac{(1-2^{-1/\alpha})^{-1}\epsilon}{(|y|/s + 1)(1-u)^{1/\alpha}}\right|^{-\alpha}\right) du \\\leq&  1+\frac{2\alpha-1}{\alpha-1} \epsilon \leq e^{\frac{2\alpha-1}{\alpha-1}\epsilon}&
\end{align*}
where we have proven the inequality second to last in \Cref{lem:integral_bound}.
\end{proof}

We are now ready to prove the main theorem of this section.
\begin{theorem} \label{thm:polynomial}
Let us have an algorithm $A(D)$ such that there exists a function $g(D)$ with global sensitivity $\leq \Delta_1$ w.r.t.\ $D$ for which for any input $D$, it holds $\mathbb{E}[|A(D) - g(D)|^\alpha] \leq \Delta_2^\alpha$ for some $\alpha > 1$. Let $Y \sim ZSPareto_\alpha(c (\Delta_1 + \Delta_2)/ \epsilon)$ for $c=\alpha + 2 + 1/(\alpha-1)$ and $\epsilon \leq c$, independent of the randomness of $A$; then $A(D) + Y$ is $\epsilon$-differentially private with respect to $D$.
\end{theorem}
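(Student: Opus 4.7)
The proof closely parallels the single-query case of \Cref{thm:subexponential}: it reduces the privacy bound to two applications of the preceding density-ratio lemma, once for the error on $D_1$ and once for a suitably shifted error on $D_2$. I would fix neighboring datasets $D_1 \sim D_2$ and set $R_1 = A(D_1) - g(D_1)$ and $R_2 = A(D_2) - g(D_1)$, anchored at the common point $g(D_1)$ (mirroring the asymmetric construction used in the subexponential theorem). Then $\|R_1\|_\alpha \leq \Delta_2$ directly by hypothesis; for $R_2$, I would decompose $R_2 = (A(D_2) - g(D_2)) + (g(D_2) - g(D_1))$, so that Minkowski's inequality (triangle inequality for $L_\alpha$-norms, valid for $\alpha \geq 1$) combined with the global-sensitivity bound on $g$ yields $\|R_2\|_\alpha \leq \Delta_2 + \Delta_1$.

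Next, I would set $y' = y - g(D_1)$ and rewrite the target density ratio as
\[
\frac{f_{A(D_1)+Y}(y)}{f_{A(D_2)+Y}(y)} \;=\; \frac{f_{R_1+Y}(y')}{f_{R_2+Y}(y')} \;=\; \frac{f_{R_1+Y}(y')/f_Y(y')}{f_{R_2+Y}(y')/f_Y(y')},
\]
so that each factor is of the form controlled by the previous lemma. Since $Y \sim \mathrm{ZSPareto}_\alpha(c(\Delta_1+\Delta_2)/\epsilon)$, the scale correspondence $s = (1-2^{-1/\alpha})^{-1}\Delta/\epsilon'$ from the lemma translates the bounds $\|R_1\|_\alpha \leq \Delta_2$ and $\|R_2\|_\alpha \leq \Delta_1+\Delta_2$ into effective parameters $\epsilon'_2 = (1-2^{-1/\alpha})^{-1}\epsilon/c$ for $R_2$ and $\epsilon'_1 = \epsilon'_2 \cdot \Delta_2/(\Delta_1+\Delta_2) \leq \epsilon'_2$ for $R_1$. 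Invoking the upper bound of the lemma on the numerator and the lower bound on the denominator produces a product of the form $\exp\!\bigl(\tfrac{2\alpha-1}{\alpha-1}\epsilon'_1 + (1-2^{-1/\alpha})\alpha\,\epsilon'_2\bigr)$; the matching reverse inequality follows by swapping the roles of $D_1$ and $D_2$, since the hypothesis is symmetric in the two databases.

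The technical heart of the verification is checking that the choice $c = \alpha + 2 + 1/(\alpha-1)$ is large enough to absorb the two contributions into a single $e^\epsilon$. Writing $\tfrac{2\alpha-1}{\alpha-1} = 2 + \tfrac{1}{\alpha-1}$ and using $\epsilon'_1 \leq \epsilon'_2$, the combined exponent is bounded by $\tfrac{\epsilon}{c}\bigl((2 + \tfrac{1}{\alpha-1})(1-2^{-1/\alpha})^{-1} + \alpha\bigr)$, so the proof reduces to an arithmetic verification that this quantity is at most $\epsilon$, i.e.\ that the bracketed expression is at most $c$. The assumption $\epsilon \leq c$ plays the dual role of keeping the effective parameters $\epsilon'_1,\epsilon'_2$ inside the admissible range of the preceding lemma. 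This constant-chasing, rather than any fundamentally new conceptual step, is the main obstacle; all the analytical content has already been isolated in \Cref{lem:integral_bound} and the density-ratio lemma immediately preceding the theorem.
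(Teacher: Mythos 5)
Your decomposition, the choice of $R_1 = A(D_1) - g(D_1)$ and $R_2 = A(D_2) - g(D_1)$ anchored at $g(D_1)$, the Minkowski step giving $\|R_2\|_\alpha \leq \Delta_1 + \Delta_2$, and the reduction to two applications of the preceding density‐ratio lemma are all exactly the paper's route. The point where your write-up and the paper diverge, and where the gap in your proposal lies, is precisely the ``constant-chasing'' you defer.

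You correctly observe that the lemma's parameter $\epsilon'$ is tied to the scale $s$ through $s = (1-2^{-1/\alpha})^{-1}\Delta/\epsilon'$, and hence with $Y \sim \mathrm{ZSPareto}_\alpha(c(\Delta_1+\Delta_2)/\epsilon)$ the effective parameter is $\epsilon'_2 = (1-2^{-1/\alpha})^{-1}\epsilon/c$. The paper's proof, by contrast, plugs in bounds $\exp\bigl(\tfrac{2\alpha-1}{(\alpha-1)c}\epsilon\bigr)$ and $\exp\bigl(-(1-2^{-1/\alpha})\alpha\epsilon/c\bigr)$, which are what you get if you substitute $\epsilon' = \epsilon/c$ directly, i.e.\ without the $(1-2^{-1/\alpha})^{-1}$ correction; the arithmetic then closes trivially since the requirement becomes $\tfrac{2\alpha-1}{\alpha-1} + (1-2^{-1/\alpha})\alpha \leq c$, which reduces to $-2^{-1/\alpha}\alpha \leq 0$. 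If you instead carry out the verification you defer — checking that $(2 + \tfrac{1}{\alpha-1})(1-2^{-1/\alpha})^{-1} + \alpha \leq c = \alpha + 2 + \tfrac{1}{\alpha-1}$ — you will find it is \emph{false}: it is equivalent to $(1-2^{-1/\alpha})^{-1} \leq 1$, i.e.\ $2^{-1/\alpha} \leq 0$, which never holds. (For $\alpha = 2$ the left side is about $12.2$ while $c = 5$.) The same factor also makes $\epsilon \leq c$ insufficient to guarantee $\epsilon'_2 \leq 1$, so the ``admissible range'' claim would need $\epsilon \leq (1-2^{-1/\alpha})c$, not $\epsilon \leq c$. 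In short, your more careful tracking of the scale-to-$\epsilon$ correspondence is not a cosmetic refinement: it breaks the final inequality, and you cannot hand-wave it away as ``constant-chasing.'' Either you need to match the paper's shortcut (applying the lemma as if $\epsilon' = \epsilon/c$, which requires re-examining the lemma's proof to see whether its stated $s$ relation can be tightened by a $(1-2^{-1/\alpha})$ factor), or you need a larger $c$ that absorbs the $(1-2^{-1/\alpha})^{-1}$. As it stands, the proposal is incomplete, and completing it honestly does not yield the stated bound.
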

\begin{proof}
This proof follows the strategy of the proof of \Cref{thm:subexponential}. Let us have two neighboring databases $D_1,D_2$. We again prove that for any $y$, it holds $f_{A(D_1)}(y)/f_{A(D_2)}(y) \leq e^\epsilon$; this implies the theorem. We also again set $R_1 = A(D_1) - g(D_1)$ and $R_2 = A(D_2) - g(D_1)$. We are assuming $\|R_1\|_\alpha \leq \Delta_2$ and by the triangle inequality, we have that $\|R_2\|_\alpha \leq \Delta_1+\Delta_2$. Therefore, we have
\begin{align*}
\frac{f_{g(D_1) + R_1 + Y}(y)}{f_{g(D_1) + Y}(y)} = \frac{f_{R_1 + Y}(y-g(D_1))}{f_{Y}(y-g(D_1))} \leq& \exp\left(\frac{2\alpha-1}{(\alpha-1)c}\epsilon\right)\\
\frac{f_{g(D_2) + R_1 + Y}(y)}{f_{g(D_1) + Y}(y)} = \frac{f_{R_1 + Y}(y-g(D_2))}{f_{Y}(y-g(D_1))} \geq& \exp\left(-(1-2^{-1/\alpha})\alpha\epsilon/c  \right)
\end{align*}
which in turn allows us to bound
\begin{align*}
f_{A(D_1)+Y}(y)/f_{A(D_2)+Y}(y) =& \frac{f_{g(D_1) + R_1 + Y}(y)}{f_{g(D_1) + Y}(y)} \cdot \frac{f_{g(D_1) + Y}(y)}{f_{g(D_2) + R_1 + Y}(y)} \\\leq& \exp\left(\left(\frac{(2\alpha-1)}{(\alpha-1)} + (1-2^{-1/\alpha})\alpha\right) \epsilon/c\right) \leq e^{\epsilon}
\end{align*}
where we now argue the last inequality; that will conclude the proof. If we set $c = 2+1/(\alpha-1)+\alpha-2^{-1/\alpha}\alpha$, the last inequality would be an equality. By monotonicity, it is thus sufficient to prove that $2+1/(\alpha-1)+\alpha-2^{-1/\alpha}\alpha \leq 2 + \log 2 + 1/(\alpha-1)$. This is equivalent to $2^{-1/\alpha} \alpha \geq \alpha - \log 2$, which in turn can be re-written as $2^{-1/\alpha} \geq 1-\log(2)/\alpha$. This follows from the inequality $e^x \geq 1+x$ for $x = -\log(2)/\alpha$.
%
%
%
%
%
%
\end{proof}


\section{Implications of our results} \label{sec:applications}
In this section, we give several implications of \Cref{thm:subexponential} and \Cref{thm:polynomial}. This list is by no means meant to be exhaustive. We start with the more straightforward applications and focus on the more involved ones later.

Recall that, as we discussed, the goal in the sublinear setting is not to simply add small amount of noise, but rather to achieve a given level of error as efficiently as possible while guaranteeing differential privacy. This is so because in this setting, the amount of error coming from the algorithm not being exact tends to be much greater than the amount of noise needed to achieve privacy when not subject to having limited resources.

\subsection{The general approach}
In all applications, we take a known algorithm for a given problem, and use either \Cref{thm:subexponential} or \Cref{thm:polynomial} to argue that adding noise to the algorithm's answer ensures privacy. 

Assume the original algorithm had complexity $T(n, \rho)$ and assume for example that the error is of magnitude $\rho n$, namely that for error $R$, it holds $E[R^2]^{1/2} \leq O(\rho n)$ (this can be generalized to $\leq O(\rho f(x))$ for $x$ being the input and $f$ being any function).
We run the algorithm with parameter $\rho' = \epsilon \rho$ and add noise of magnitude $O(\rho n)$ (more generally $O(\rho f(x))$). By \Cref{thm:polynomial} with $\alpha = 2$, as long as the approximated function's sensitivity is $\Delta \leq \epsilon \rho n$, this is $\epsilon$-differentially private\footnote{This upper bound on the sensitivity ensures that $\Delta_2$ in \Cref{thm:polynomial} dominates.}. At the same time, the error is $\leq O(\rho n)$ with arbitrarily high constant probability\footnote{The error coming from the algorithm will not be too large by Chebyshev and the error from the noise will not be too large by our bound \eqref{eq:pareto_cdf}. By the union bound, neither of the two sources of error will be too large with arbitrarily high constant probability.}. The time complexity is $T(n, \epsilon \rho)$.

If we want to achieve a failure probability of $\beta$, we run this algorithm $\Theta(\log \beta^{-1})$ times and take the median. By a standard probability amplification argument, the success probability will be as desired. To achieve $\epsilon$-differential privacy by composition, we have to divide the privacy budget between the runs, resulting in complexity $O(T(n, \epsilon \rho / \log \beta^{-1}) \log \beta^{-1})$. This can be summarized (and generalized with the general function $f(x)$) as follows:
\begin{lemma} \label{lem:privatize_poly}
Suppose there is an algorithm approximating a function $g$ with global sensitivity $\Delta$ such that $E[(A(x) - g(x))^2]^{1/2} \leq \rho f(x)$ for some function $f$ with time/space/query complexity $T(n, \rho)$. Then for $\epsilon \leq O(1)$ there exists an $\epsilon$-differentially private algorithm $A'$ such that when $\epsilon \rho \geq \Omega(\Delta /f(x))$, it holds $P[|A'(x) - g(x)| > \rho f(x)] \leq \beta$ and that has complexity $O(T(n, \epsilon \rho / \log \beta^{-1}) \log \beta^{-1})$.
\end{lemma}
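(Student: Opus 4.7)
The plan is to follow the sketch given just before the lemma statement: build a base mechanism with constant success probability by combining the approximation algorithm $A$ with zero-symmetric Pareto noise using \Cref{thm:polynomial}, then amplify to failure probability $\beta$ by running the base mechanism $k=\Theta(\log\beta^{-1})$ times and taking the median. I would set $\epsilon' := \epsilon/k$ as the per-run privacy budget and $\rho' := c_1 \epsilon'\rho$ as the per-run accuracy parameter, for a small constant $c_1>0$ to be fixed later.

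First I would analyze a single base run. Running $A$ with parameter $\rho'$ yields an estimator with second-moment error $\|A(x)-g(x)\|_2 \le c_1 \epsilon'\rho\, f(x)$, and $g$ has global sensitivity $\Delta$. Invoking \Cref{thm:polynomial} with $\alpha=2$, $\Delta_1=\Delta$, $\Delta_2=c_1\epsilon'\rho f(x)$, and privacy budget $\epsilon'$, it suffices to add $Y\sim ZSPareto_2\!\left(c(\Delta+c_1\epsilon'\rho f(x))/\epsilon'\right)$ to obtain an $\epsilon'$-differentially private release; provided the lemma's hypothesis $\epsilon\rho\ge \Omega(\Delta/f(x))$ is strong enough that $\Delta \le O(\epsilon'\rho f(x))$ (absorbing the $\log\beta^{-1}$ into the hidden $\Omega$-constant), the noise scale is $O(\rho f(x))$. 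Then the total deviation $|A(x)+Y-g(x)|$ is $O(\rho f(x))$ with some constant probability $p>1/2$: the algorithm contribution is bounded by Chebyshev applied to the second moment, and the noise contribution is bounded via the Pareto CDF in~\eqref{eq:pareto_cdf}, after which the union bound over the two events and a suitably small $c_1$ gives $p \ge 2/3$.

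Next I would amplify. Running the base mechanism $k=\Theta(\log\beta^{-1})$ times independently produces outputs $Z_1,\dots,Z_k$, each within $\rho f(x)$ of $g(x)$ independently with probability at least $2/3$. The median $Z^\star$ deviates from $g(x)$ by more than $\rho f(x)$ only if at least half of the $Z_i$'s do, and a standard Chernoff bound on the number of failures yields probability $\le \beta$ with the constant in $k$ chosen appropriately. Privacy follows from basic (sequential) composition: $k$ independent $\epsilon'$-DP mechanisms followed by post-processing (the median) give $k\epsilon'=\epsilon$ differential privacy. The complexity is $k$ invocations of $A$ with parameter $\Theta(\epsilon\rho/\log\beta^{-1})$ plus $O(k)$ samples of Pareto noise, totaling $O(T(n,\epsilon\rho/\log\beta^{-1})\log\beta^{-1})$.

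The main obstacle is essentially bookkeeping rather than anything technical: choosing $c_1$ small enough that the single-run constant success probability exceeds $1/2$ strictly (so Chernoff amplification applies), choosing the constant in $k$ so that the median concentration yields $\beta$, and verifying that the hypothesis $\epsilon\rho\ge\Omega(\Delta/f(x))$ swallows the extra $\log\beta^{-1}$ factor appearing in $\epsilon'=\epsilon/k$. I would also check that $\epsilon'\le O(1)$ so that \Cref{thm:polynomial} is in force, which holds whenever $\epsilon\le O(1)$ as assumed.
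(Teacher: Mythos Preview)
Your proposal is correct and follows essentially the same approach as the paper's own argument: apply \Cref{thm:polynomial} with $\alpha=2$ to a single run at accuracy $\Theta(\epsilon'\rho)$ and privacy budget $\epsilon'=\epsilon/k$, control the single-run error via Chebyshev plus the Pareto tail bound~\eqref{eq:pareto_cdf}, then amplify by taking the median of $k=\Theta(\log\beta^{-1})$ independent runs and use basic composition. You have also correctly identified the bookkeeping issue (that the per-run condition becomes $\Delta\le O(\epsilon'\rho f(x))$, so the hidden constant in the hypothesis $\epsilon\rho\ge\Omega(\Delta/f(x))$ must absorb a $\log\beta^{-1}$ factor), which the paper itself glosses over.
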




A more efficient approach for decreasing failure probability exists in the case of subexponential error.
Assume the same setting as above, except that the error's subexponential diameter is $\rho n$ instead having only moment bounds (like above, we can generalize to $\rho f(x)$ instead of $\rho n$).
We run the algorithm with parameter $\rho' = \epsilon \rho / \log \beta^{-1}$ and add noise of magnitude $\Theta(\rho n/ \log \beta^{-1})$. This algorithm is $\epsilon$-differentially private by \Cref{thm:subexponential}, as long as $\Delta \leq \epsilon \rho n$. The noise has subexponential diameter $O(\rho' n)$ and by \Cref{lem:sub_inequality}, the total error will up to a constant have the same subexponential diameter. By the definition of subexponential diameter, the probability that the error is $\geq \Theta(\rho n)$ is $\leq \beta$ as desired. This results in complexity $O(T(n, \epsilon\rho/\log \beta^{-1}))$ saving us one $\log \beta^{-1}$ factor. We can summarize this as
\begin{lemma}
Suppose there is an algorithm approximating a function $g$ with global sensitivity $\Delta$ such that $\sigma_{se}[A(x) - g(x)] \leq \rho f(x)$ for some function $f$ with time/space/query complexity $T(n, \rho)$. Then for $\epsilon \leq O(1)$, there exists an $\epsilon$-differentially private algorithm $A'$ such that when $\epsilon \rho \geq \Omega(\Delta/f(x))$, it holds $P[|A'(x) - g(x)| > \rho f(x)] \leq \beta$ and that has complexity $O(T(n, \epsilon \rho / \log \beta^{-1}))$.
\end{lemma}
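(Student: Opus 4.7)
The plan is to reduce the lemma directly to the single-query case of \Cref{thm:subexponential}, combined with \Cref{lem:sub_inequality}, mirroring the informal discussion immediately preceding the statement. Concretely, I would run the given algorithm $A$ with a tightened accuracy parameter $\rho' = c_1 \epsilon \rho / \log \beta^{-1}$ for a sufficiently small constant $c_1$, and output $A'(x) := A_r(x) + Y$, where $Y \sim \mathrm{Laplace}(c_2(\Delta + \rho' f(x))/\epsilon)$ and $c_2$ is the constant required by the $k=1$ case of \Cref{thm:subexponential}.

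For privacy I would invoke the $k=1$ case of \Cref{thm:subexponential} with $\Delta_1 := \Delta$ (the global sensitivity of $g$) and $\Delta_2 := \rho' f(x)$, which upper-bounds the subexponential diameter of the error $A_r(x)-g(x)$ when $A$ is run with parameter $\rho'$. The assumption $\epsilon \leq O(1)$ is calibrated so that the bound $\epsilon \leq c_2/2$ of the theorem is satisfied, and the conclusion directly gives that $A'$ is $\epsilon$-differentially private with respect to $D$.

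For accuracy I would write $A'(x) - g(x) = (A_r(x)-g(x)) + Y$, a sum of two independent random variables, and bound each summand's subexponential diameter separately: the first is at most $\rho' f(x)$ by hypothesis, and the second is $O((\Delta+\rho' f(x))/\epsilon)$ since $\mathrm{Laplace}(b)$ has subexponential diameter $\Theta(b)$. Using the hypothesis $\epsilon \rho \geq \Omega(\Delta/f(x))$, with the implicit constant chosen large enough to absorb the $\log\beta^{-1}$ factor, both contributions are $O(\rho f(x)/\log\beta^{-1})$. Then \Cref{lem:sub_inequality} gives that the total error has subexponential diameter $O(\rho f(x)/\log\beta^{-1})$, so by the defining tail inequality of $\sigma_{se}$, one gets $P[|A'(x)-g(x)| \geq \rho f(x)] \leq 2\exp(-\Omega(\log\beta^{-1})) \leq \beta$ after appropriately tuning constants. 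The total cost is one run of $A$ at parameter $\rho'$ plus sampling one Laplace variable, giving complexity $O(T(n, \epsilon\rho/\log\beta^{-1}))$.

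The proof is essentially bookkeeping once \Cref{thm:subexponential} and \Cref{lem:sub_inequality} are available, so there is no new technical obstacle. The only genuine subtlety I anticipate is threading the absolute constants carefully so that simultaneously (i) the privacy constant $c$ of \Cref{thm:subexponential} is respected, (ii) the hidden constant in the condition $\epsilon\rho = \Omega(\Delta/f(x))$ is large enough for $\Delta$ to be subdominant relative to $\rho' f(x)$ after scaling by $\log\beta^{-1}$, and (iii) the final subexponential tail decays fast enough in $\log\beta^{-1}$ to yield failure probability at most $\beta$. All three can be satisfied by choosing $c_1$ small enough once $c_2$ is fixed.
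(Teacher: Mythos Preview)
The proposal is correct and takes essentially the same approach as the paper's informal argument preceding the lemma: run $A$ with the tightened parameter $\rho' = \Theta(\epsilon\rho/\log\beta^{-1})$, add Laplace noise calibrated by the $k=1$ case of \Cref{thm:subexponential}, and deduce the accuracy bound by combining the two subexponential diameters via \Cref{lem:sub_inequality} and then applying the defining tail inequality. Your flagged subtlety about threading the $\log\beta^{-1}$ factor through the condition $\epsilon\rho \geq \Omega(\Delta/f(x))$ is real, and the paper treats it at the same informal level you do.
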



\noindent
We are now ready to give private algorithms for specific problems.

\subsection{Frequency moment \texorpdfstring{$F_2$}{F\_2}}
In their seminal paper, \citet{alon1996space} show a sketch that allows one to estimate the $F_2$ frequency moment, defined as $F_2(x_1, \cdots, x_n) = \sum_{i=1}^n x_i^2$. In the streaming setting, the vector $x_1, \cdots, x_n$ is given through a stream of updates $y_1, \cdots, y_k$ of the form $y_j = (\ell_j, D_j)$ where $D$ can be negative and we define $x_i = \sum_{j=1}^k I[\ell_j = i] D_j$. Two inputs are then adjacent if they differ in one value $y_j$ for some $j$. The algorithm from \cite{alon1996space} uses space $O(\frac{1}{\rho^2})$ and has mean squared error of $\leq \rho^2 F_2^2 \leq \rho^2 n^4$. The sensitivity of the $F_2$ moment is $n$. This implies the following 
\begin{corollary}
For $\epsilon \leq O(1)$ and $\rho \leq 1/(\epsilon n)$, there is an $\epsilon$-differentially private algorithm that returns an additive $\pm \rho n^2$ approximation of the frequency moment $F_2$ with probability $1-\beta$, and has space complexity $O(\frac{\log^3 \beta^{-1}}{\rho^2 \epsilon^2})$.
\end{corollary}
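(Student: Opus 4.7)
The plan is to instantiate \Cref{lem:privatize_poly} with the AMS sketch of \citet{alon1996space} as the underlying randomized approximation algorithm, so essentially all the work has already been done in the paper's general framework. I would begin by identifying the four ingredients the lemma requires. The function being approximated is $g(D) = F_2(D) = \sum_i x_i^2$, whose global sensitivity with respect to the streaming neighbor relation (changing one update $y_j = (\ell_j, D_j)$) is $\Delta \leq n$, as noted in the paragraph preceding the corollary. The scaling function is $f(D) = n^2$, since the AMS analysis gives mean-squared error $\mathbb{E}[(A(D)-g(D))^2] \leq \rho^2 F_2^2 \leq \rho^2 n^4$, i.e. $\|A(D)-g(D)\|_2 \leq \rho f(D)$. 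The space complexity of the AMS sketch at accuracy $\rho$ is $T(n,\rho) = O(1/\rho^2)$.

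Next, I would plug these into \Cref{lem:privatize_poly}. Its side condition $\epsilon \rho \geq \Omega(\Delta / f(D)) = \Omega(n/n^2) = \Omega(1/n)$ is exactly the hypothesis $\rho \geq 1/(\epsilon n)$ of the corollary (read as a lower bound on $\rho$; the ``$\leq$'' in the statement appears to be a typo — otherwise the sensitivity term $\Delta_1$ in \Cref{thm:polynomial} would dominate the noise scale). The lemma then directly yields an $\epsilon$-differentially private algorithm whose additive error is at most $\rho n^2$ with probability $1-\beta$, having space complexity
\[
O\bigl(T(n, \epsilon\rho/\log\beta^{-1})\, \log\beta^{-1}\bigr) = O\left(\frac{\log^2\beta^{-1}}{\epsilon^2 \rho^2} \cdot \log\beta^{-1}\right) = O\!\left(\frac{\log^3 \beta^{-1}}{\rho^2 \epsilon^2}\right),
\]
which matches the claim exactly.

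I do not foresee any substantive obstacle: the corollary is a direct unpacking of \Cref{lem:privatize_poly} after reading off the correct $g$, $\Delta$, $f$, $T$. The two small things worth verifying are (i) the AMS estimator, when tuned with $O(1/\rho^2)$ parallel copies averaged, does achieve the stated second-moment bound — this is the standard Alon–Matias–Szegedy variance analysis — and (ii) the $ZSPareto_2$ noise used inside \Cref{thm:polynomial} is drawn independently of the sketch's internal randomness, which is automatic since \Cref{lem:privatize_poly} stipulates this in its construction. No new probabilistic or algorithmic ideas are required beyond those already developed in \Cref{sec:applications}.
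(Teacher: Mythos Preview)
Your proposal is correct and matches the paper's intended argument: the paper simply records the AMS sketch's space $T(n,\rho)=O(1/\rho^2)$, its mean-squared error bound $\rho^2 n^4$, and the sensitivity $\Delta=n$, then states the corollary as an immediate consequence of the general approach in \Cref{lem:privatize_poly}. Your observation that the hypothesis should read $\rho \geq \Omega(1/(\epsilon n))$ rather than $\rho \leq 1/(\epsilon n)$ is also accurate and applies uniformly to the analogous corollaries in the section.
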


\noindent No streaming algorithm for private $F_2$ moments was previously known. Shortly after releasing this paper, an approach with the incomparable complexity of $O(\log (n) \log^3 \beta^{-1} / \rho ^2)$ was shown (for $\epsilon$ being not too small) that also has multiplicative approximation guarantees and works (with some loss in the complexity) in the continual release setting \cite{epasto2023differentially}. The setting of $F_p$ for $p \in [0,1]$ has been considered in \cite{wang2021differentially}.


\subsection{Connected components}
\citet{berenbrink2014estimating} have shown an algorithm that returns an estimate $\hat{c}$ of the number of connected components $c$ of a simple graph in time $O(\frac{1}{\rho^2} \log \frac{1}{\rho})$ and has mean squared error $\mathbb{E}[(\hat{c} - c)^2] \leq \rho^2 n^2$. At the same time, the number of connected components has global sensitivity $1$ with respect to edge additions/deletions. This gives us the following
\begin{corollary}
For $\epsilon \leq O(1)$ and $\rho \leq 1/(\epsilon n)$, there is an $\epsilon$-edge-differentially private algorithm that returns an additive $\pm \rho n$ approximation of the number connected components with probability $1-\beta$, and has complexity $O(\frac{\log^3 \beta^{-1}}{\rho^2 \epsilon^2} \log \frac{\log \beta^{-1}}{\rho \epsilon})$.
\end{corollary}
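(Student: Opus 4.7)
The plan is a direct application of \Cref{lem:privatize_poly} to the Berenbrink--Krayenhoff--Mallmann-Trenn approximation algorithm, so the proof should be quite short. I would let $A$ be their algorithm, $g$ the number-of-connected-components function, and set $f(x) = n$. The stated mean squared error bound $\mathbb{E}[(\hat c - c)^2] \leq \rho^2 n^2$ immediately rewrites as $\mathbb{E}[(A(x) - g(x))^2]^{1/2} \leq \rho \cdot f(x)$, matching the lemma's hypothesis.

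Next I would verify the sensitivity condition. For the edge-neighbor relation, adding or removing a single edge can change the number of connected components by at most $1$, so $\Delta = 1$. The lemma's condition $\epsilon\rho \geq \Omega(\Delta / f(x)) = \Omega(1/n)$ therefore matches the corollary's stated regime $\rho \geq \Omega(1/(\epsilon n))$ (here I read the corollary's inequality as the range of meaningful $\rho$ where the privacy-induced noise is dominated by $\rho n$; the complementary regime is uninteresting since the privacy noise alone exceeds the target error).

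Finally I would plug in the time complexity $T(n, \rho) = O(\rho^{-2} \log \rho^{-1})$ of the base algorithm. Applying \Cref{lem:privatize_poly} yields an $\epsilon$-edge-differentially private algorithm whose output is within $\rho n$ of the true count with probability $1 - \beta$ and whose running time is
\[
O\!\left(T\!\left(n, \tfrac{\epsilon\rho}{\log\beta^{-1}}\right) \log\beta^{-1}\right) = O\!\left(\frac{\log^2 \beta^{-1}}{\rho^2 \epsilon^2} \cdot \log\frac{\log\beta^{-1}}{\rho\epsilon} \cdot \log\beta^{-1}\right) = O\!\left(\frac{\log^3 \beta^{-1}}{\rho^2 \epsilon^2} \log\frac{\log\beta^{-1}}{\rho\epsilon}\right),
\]
which is exactly the bound in the statement.

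There is no real obstacle here: the proof is entirely a matter of checking that the Berenbrink et al.\ guarantees plug into the hypotheses of \Cref{lem:privatize_poly}. The only point worth flagging is the interpretation of the $\rho$-range in the corollary and confirming that edge-level sensitivity of the connected-components function is indeed $1$ (rather than, say, a node-level bound, which would be $n$ and break the argument).
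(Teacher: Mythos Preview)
Your proposal is correct and matches the paper's approach exactly: the paper's ``proof'' is just the paragraph preceding the corollary, which notes the MSE bound and sensitivity of the Berenbrink et al.\ algorithm and implicitly invokes \Cref{lem:privatize_poly}. Your observation about the direction of the $\rho$ inequality is also apt---the condition in \Cref{lem:privatize_poly} requires $\epsilon\rho \geq \Omega(1/n)$, so the corollary's stated ``$\rho \leq 1/(\epsilon n)$'' appears to be a typo (the same reversed inequality appears in the neighboring corollaries for $F_2$ and maximum matching), and your reading of it as $\rho \geq \Omega(1/(\epsilon n))$ is the correct one.
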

\noindent No private sublinear-complexity algorithm for estimating the number of connected components was previously known.

\subsection{Maximum matching}
\citet{yoshida2009improved} show an algorithm that can approximate the size of the maximum matching to within multiplicative $1+\rho$ in time $d^{O(1/\rho^2)}(1/\rho)^{O(1/\rho)}$ for $d$ being the maximum degree of the input graph. It works by implementing an oracle for a matching of size within factor $1+\rho/2$ of the maximum matching; for a specified vertex, this oracle answers whether the vertex is matched in the oracle's matching. The algorithm then samples $\Theta(1/\rho^2)$ vertices and checks the fraction that is matched in the oracle's matching. The error coming from the oracle is $\leq \rho n/2$ in the worst case and thus has subexponential diameter $O(\rho n)$. The error coming from the sampling has subexponential diameter $O(\rho n)$ by the Hoeffding inequality. By \Cref{lem:sub_inequality}, the subexponential diameter of the error is thus $O(\rho n)$. At the same time, the global sensitivity of the maximum matching size is $\leq 1$ with respect to the removal of one vertex. This gives us the following
\begin{corollary}
For $\epsilon \leq O(1)$ and $\rho \leq 1/(\epsilon n)$, there is an $\epsilon$-node-differentially-private algorithm that returns an additive $\pm \rho n$ approximation of the maximum matching size with probability $1-\beta$ in time $d^{O(\log ^2(\beta^{-1})/(\rho^2 \epsilon^2))}/(\rho\epsilon)^{O(\log (\beta^{-1})/(\rho \epsilon))}$.
\end{corollary}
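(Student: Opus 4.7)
\medskip

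\noindent\textbf{Proof proposal.} My plan is to invoke the Yoshida--Yamamoto--Ito (YYI) estimator as a black box, verify that its error has subexponential diameter $O(\rho n)$, check that the maximum matching size has global sensitivity $1$ under node-level neighbors, and then apply the subexponential noise mechanism of \Cref{thm:subexponential} together with the complexity bookkeeping already developed for subexponential-error algorithms in \Cref{sec:applications}. The structure will mirror the applications to $F_2$ and connected components, the only new ingredient being the decomposition of YYI's error into an oracle part and a sampling part.

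\medskip
\noindent First, I will instantiate YYI with an accuracy parameter $\rho' := \Theta(\epsilon \rho / \log \beta^{-1})$, so that the resulting estimator $A$ returns a $(1+\rho'/2)$-multiplicative-approximation-based estimate of the matching size $g(G)$ with error of the form $R = R_{\text{oracle}} + R_{\text{sample}}$. The oracle error is deterministically bounded by $\rho' n / 2$, hence trivially $\sigma_{se}[R_{\text{oracle}}] = O(\rho' n)$. The sampling error is the empirical mean of $\Theta(1/\rho'^2)$ bounded indicators centered on the correct value, so the Hoeffding tail bound $\mathbb{P}[|R_{\text{sample}}| \geq t] \leq 2\exp(-\Omega(\rho'^2 t^2 / n^2))$ gives $\sigma_{se}[R_{\text{sample}}] = O(\rho' n)$ (a sub-Gaussian variable is in particular subexponential with diameter equal to its sub-Gaussian diameter). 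Applying \Cref{lem:sub_inequality} to $R_{\text{oracle}} + R_{\text{sample}}$ yields $\sigma_{se}[R] \leq O(\rho' n)$.

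\medskip
\noindent Next, under node-level privacy the maximum matching size changes by at most $1$ when a vertex (and all its incident edges) is removed, so the target function $g$ has global sensitivity $\Delta_1 = 1$. With $\Delta_2 = O(\rho' n)$ and the assumption $\rho \geq \Omega(1/(\epsilon n))$, the term $\Delta_1$ is dominated by $\Delta_2 / \epsilon$ in the magnitude of the Laplace noise prescribed by \Cref{thm:subexponential}; so releasing $A(G) + \mathrm{Laplace}(O(\rho' n / \epsilon))$ is $\epsilon$-node-differentially-private. The noise has subexponential diameter $O(\rho' n / \epsilon) = O(\rho n / \log \beta^{-1})$, and by \Cref{lem:sub_inequality} the total error has subexponential diameter $O(\rho n / \log \beta^{-1})$; by the definition of subexponential diameter, this error exceeds $\rho n$ with probability at most $2 \exp(-\Omega(\log \beta^{-1})) \leq \beta$, after adjusting constants. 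This is exactly the argument already formalized in the subexponential version of the packaged lemma in \Cref{sec:applications}.

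\medskip
\noindent Finally, I will plug $\rho' = \Theta(\epsilon \rho / \log \beta^{-1})$ into YYI's runtime $d^{O(1/\rho'^2)} (1/\rho')^{O(1/\rho')}$, yielding $d^{O(\log^2 \beta^{-1} / (\rho \epsilon)^2)} (1/(\rho\epsilon))^{O(\log \beta^{-1} / (\rho\epsilon))}$ as stated (the $\log\beta^{-1}$ factors inside the base of the second term are absorbed into the exponent up to constants). I do not expect a serious obstacle in this argument; the only step requiring a little care is confirming that YYI's sampling error really is sub-Gaussian (not merely bounded-variance), which follows from the fact that each sample is a $\{0,1\}$-valued query to a fixed oracle matching, so Hoeffding applies cleanly.
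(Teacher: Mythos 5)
Your proposal is correct and follows essentially the same route as the paper: the same decomposition of the YYI error into a worst-case-bounded oracle term and a Hoeffding-controlled sampling term, the same use of \Cref{lem:sub_inequality} to combine them, the same global-sensitivity-$1$ observation for node-level neighbors, and the same invocation of \Cref{thm:subexponential} with $\rho' = \Theta(\epsilon\rho/\log\beta^{-1})$ followed by the runtime substitution. (As a minor point, you correctly read the hypothesis as $\rho \geq \Omega(1/(\epsilon n))$, which is the direction actually needed for $\Delta_2$ to dominate $\Delta_1$.)
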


\noindent This solves the open problem posed in \cite{blocki2022privately} where the authors show a hybrid $(2, \rho n)$ approximation, while we give a purely additive $\pm \rho n$ approximation.

\subsection{Rank queries} \label{sec:rank}
\citet{karnin2016optimal} develop a sketch of size $O(\frac{1}{\rho})$ that allows one to answer rank queries with error with subexponential diameter $\rho n$. We show how to use their sketch to answer range queries over an ordered universe. For small number of queries, this improves upon the work of \cite{kaplan2021note} which has a logarithmic dependency on the universe size. 
This gives us the following corollary.
\begin{corollary} \label{cor:kll}
There is a sketch that allows $\epsilon$-differentially private algorithm that returns $k$ rank queries (potentially adaptive) for $\epsilon \leq O(1)$ with an additive $\pm \rho n$ error with probability $1-\beta$, and has complexity $O(\frac{k \log^2(k/\beta)}{\rho \epsilon})$. 
\end{corollary}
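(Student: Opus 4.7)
The plan is to apply the multi-query part of Theorem~\ref{thm:subexponential} to the KLL sketch. Two facts suffice: (i) the rank function $g(D,x) = |\{i : D_i \le x\}|$ has global sensitivity $\Delta_1 = 1$ with respect to element-level additions or deletions, because a single insertion or deletion shifts the rank of any threshold $x$ by at most one; and (ii) if we instantiate the KLL sketch with accuracy parameter $\rho'$, its answer to any individual rank query has error with subexponential diameter $\Delta_2 = O(\rho' n)$. Crucially, when the $k$ (potentially adaptive) queries are all answered from the \emph{same} sketch, the answers share the sketch's internal randomness, so vanilla composition does not apply. The adaptive-queries clause of Theorem~\ref{thm:subexponential} is tailored exactly to this situation.

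First, apply Theorem~\ref{thm:subexponential} directly: draw independent noise $Y_i \sim \mathrm{Laplace}\bigl(c k (\Delta_1 + \Delta_2)/\epsilon\bigr) = \mathrm{Laplace}\bigl(O(k\rho' n/\epsilon)\bigr)$ (working in the regime $\rho' n \ge 1$, where $\Delta_2$ dominates $\Delta_1$), and release each noised answer in turn. The theorem certifies $\epsilon$-differential privacy even against an adversary that chooses each query as a function of the previously released values.

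Second, analyze the accuracy. The per-query error decomposes as KLL error (subexponential diameter $O(\rho' n)$) plus Laplace noise (subexponential diameter $\Theta(k\rho' n/\epsilon)$), so by Lemma~\ref{lem:sub_inequality} the total per-query error has subexponential diameter $\sigma = O(k\rho' n/\epsilon)$. By the definition of subexponential diameter, one query has error exceeding $t$ with probability at most $2 e^{-t/\sigma}$; a union bound over the $k$ queries then tells us that with probability $\ge 1-\beta$ every error is $O(\sigma \log(k/\beta))$. Requiring this to be at most $\rho n$ fixes $\rho' = \Theta(\rho \epsilon / (k \log(k/\beta)))$, and one checks that the standing constraint $\epsilon \rho \ge \Omega(1/n)$ from Lemma~\ref{lem:privatize_poly}'s regime guarantees $\rho' n \ge 1$ so that our dominance assumption above is consistent.

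Third, tally the complexity. The KLL sketch with accuracy $\rho'$ uses space $O(1/\rho') = O(k \log(k/\beta)/(\rho \epsilon))$, which handles stream updates and stores the state; each of the $k$ rank queries is then answered in $O(\log(1/\rho')) = O(\log(k\log(k/\beta)/(\rho\epsilon)))$ time by standard search inside the sketch's sorted levels, contributing a further $\log(k/\beta)$-type factor when summed over the $k$ queries. Combining gives the claimed $O(k \log^2(k/\beta)/(\rho \epsilon))$ bound. The main obstacle is not any particular calculation but rather justifying that privacy is preserved when all $k$ queries hit one shared sketch; that step is exactly what the adaptive-queries clause of Theorem~\ref{thm:subexponential} (via the black-box reduction in Lemma~\ref{lem:adaptive}) provides, and without it we would have to split the privacy budget across queries and maintain $k$ independent sketches, losing a factor of $k$ in the space bound.
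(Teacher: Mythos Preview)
Your proof is correct and follows essentially the same approach as the paper: instantiate the KLL sketch with accuracy $\rho' = \Theta(\rho\epsilon/(k\log(k/\beta)))$, invoke the adaptive multi-query part of Theorem~\ref{thm:subexponential} with Laplace noise of scale $\Theta(\rho n/\log(k/\beta))$, and use the subexponential tail plus a union bound over the $k$ queries to get the accuracy guarantee. One minor quibble: the regime constraint you cite from Lemma~\ref{lem:privatize_poly} is the polynomial-tail analogue rather than the subexponential one, and in any case the actual condition needed for $\Delta_2$ to dominate $\Delta_1=1$ is $\rho' n \ge \Omega(1)$, i.e.\ $\epsilon\rho n \ge \Omega(k\log(k/\beta))$, slightly stronger than what you wrote.
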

\begin{proof}
We use the KLL sketch with error parameter $\rho' = \rho \epsilon / (k \log (k/\beta))$. This means that the error has a subexponential diameter of $\leq \rho \epsilon n/(k \log (k/\beta))$. Therefore, by \Cref{thm:subexponential}, it holds that using for each query a Laplace mechanism with error magnitude $\Theta(\rho n/\log (k/\beta))$ will result in $\epsilon$-differential privacy.

By \Cref{lem:sub_inequality}, the overall subexponential diameter the error of each answer is $O(\rho n / \log (k/\beta))$ and there the probability of error $O(\rho n)$ is $1-\beta/k$. By the union bound, the overall success probability is $\geq 1-\beta$.
\end{proof}

This is in comparison to the approach of \citet{kaplan2021note} which results in space complexity $O(\frac{\log |U| \log (k/\beta)}{\rho \epsilon})$, improving by a factor of $\log |U|$ for constant $k,\beta$, where $\mathcal{U}$ is the universe.

\subsection{Relative approximation sublinear-time algorithms} \label{sec:sublinear_geom_search}
A common way of designing sublinear time algorithms with relative error is to design an algorithm that requires ``advice" in the form of a constant-factor approximation of the answer and then getting rid of the need for this advice \cite{eden2018approximating}. While this may seem impossible, given that the algorithm needs to roughly know the correct answer in order to give a correct estimate, there is a way that makes this possible under some assumptions. We now show that this advice-removal technique can be modified to also handle differential privacy under mild assumptions.%

First, we need a simple lemma.
\begin{lemma} \label{lem:expectation of median}
Let $X_1, \cdots, X_{2k-1}$ be i.i.d. non-negative random variables. Then 
\[
\mathbb{E}[\text{median}(X_1, \cdots, X_{2k-1})] \leq 2 \binom{2k-1}{k} \mathbb{E}[X_1]
\]
\end{lemma}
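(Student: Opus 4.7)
The plan is to use the identity $\mathbb{E}[Z] = \int_0^\infty \mathbb{P}[Z \geq t]\,dt$ that holds for any non-negative random variable $Z$, apply it to $M := \text{median}(X_1,\dots,X_{2k-1})$, and bound the tail $\mathbb{P}[M \geq t]$ by a union bound over subsets of size $k$.

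First I would observe the combinatorial characterization: $M \geq t$ if and only if at least $k$ of the $X_i$'s are $\geq t$. So the event $\{M \geq t\}$ is contained in the union $\bigcup_{S} \{X_i \geq t \text{ for all } i \in S\}$ over size-$k$ subsets $S \subseteq [2k-1]$. A union bound together with independence and identical distribution gives
\begin{align*}
\mathbb{P}[M \geq t] \;\leq\; \binom{2k-1}{k}\, \mathbb{P}[X_1 \geq t]^k.
\end{align*}
Since $\mathbb{P}[X_1 \geq t] \in [0,1]$, we may trivially bound $\mathbb{P}[X_1 \geq t]^k \leq \mathbb{P}[X_1 \geq t]$ (this is where a factor of $2$ of slack is absorbed, compared to the statement's claim).

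Finally I would integrate over $t \geq 0$:
\begin{align*}
\mathbb{E}[M] \;=\; \int_0^\infty \mathbb{P}[M \geq t]\,dt \;\leq\; \binom{2k-1}{k}\int_0^\infty \mathbb{P}[X_1 \geq t]\,dt \;=\; \binom{2k-1}{k}\,\mathbb{E}[X_1],
\end{align*}
which is in fact stronger than the stated inequality (the factor $2$ is not tight with this approach). There is no real obstacle: the only step that requires a moment of thought is the translation between the median being large and at least $k$ of the variables being large, after which the rest is just a union bound plus a layer-cake integration. The factor $2$ in the statement presumably provides slack for alternative proofs or to match notation elsewhere; the argument above comfortably covers it.
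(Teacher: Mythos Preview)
Your argument is correct and in fact yields the sharper bound $\mathbb{E}[M] \leq \binom{2k-1}{k}\mathbb{E}[X_1]$. The paper's proof begins identically---layer-cake formula plus the union bound $\mathbb{P}[M\geq t] \leq \binom{2k-1}{k}\mathbb{P}[X_1\geq t]^k$---but then handles the integral $\int_0^\infty \mathbb{P}[X_1\geq t]^k\,dt$ differently: it applies Markov's inequality $\mathbb{P}[X_1\geq t]\leq \min(1,\mathbb{E}[X_1]/t)$, splits the integral at $t=\mathbb{E}[X_1]$, and evaluates to obtain $\mathbb{E}[X_1]\cdot k/(k-1)\leq 2\,\mathbb{E}[X_1]$. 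Your one-line observation $\mathbb{P}[X_1\geq t]^k \leq \mathbb{P}[X_1\geq t]$ sidesteps all of this and loses nothing, so the factor $2$ in the statement is indeed slack; the paper's detour through Markov is what introduces it (and, incidentally, requires $k\geq 2$ for the integral $\int_{\mathbb{E}[X_1]}^\infty t^{-k}\,dt$ to converge, whereas your argument works uniformly for all $k\geq 1$).
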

\begin{proof}\strut

\vspace{-3.7em}
\begin{align*}
\mathbb{E}[\text{median}(X_1, \cdots, X_{2k-1})] =& \int_0^\infty \mathbb{P}[\text{median}(X_1, \cdots, X_{2k-1}) \geq t] dt \\\leq& \int_0^\infty \binom{2k-1}{k} \mathbb{P}[X_1 \geq t]^{k} dt \\\leq& \binom{2k-1}{k} \int_0^\infty \min(1,\mathbb{E}[X_1]^k/t^k) dt \\=& \binom{2k-1}{k}\left(\mathbb{E}[X_1] + \mathbb{E}[X_1]^k \int_{\mathbb{E}[X_1]}^\infty 1/t^k dt\right) \\=& \binom{2k-1}{k}\left(\mathbb{E}[X_1] + \mathbb{E}[X_1]^k \frac{\mathbb{E}[X_1]^{1-k}}{k-1}\right) \\\leq& 2 \binom{2k-1}{k} \mathbb{E}[X_1]
\end{align*}
where the first inequality is by union bound, over all subsets of size $k$ and the second inequality holds by the Markov inequality.
\end{proof}

We are now ready to prove the theorem. The value $y$ in the following is the advice and intuitively speaking, one should think that we want $y$ to be $\approx g(x)$.
\begin{theorem} \label{lem:geom_search_for_sublinear_alg}
Let us have an algorithm $A(x,y, \rho)$, a function $g(x)$ with global sensitivity $\Delta$ such that $\sup_x g(x) \leq M$ and $\inf_x g(x) \geq m$. Assume that 
\begin{enumerate}[1)]
    \item $\mathbb{E}[A(x,y, \rho)] \leq g(x)$,
    \item $\mathbb{E}[|A(x,y, \rho) - g(x)|] \leq \rho y$,
    \item has complexity $O(T(x,y,\rho))$ such that $O(T(x,y',\rho) \cdot (y'/y)^{c_1}) \leq T(x,y,\rho) \leq O(T(x,y',\rho) \cdot (y'/y)^{c_2})$ for any $x,y', y,\rho$ and some constants $c_1,c_2>0$.
\end{enumerate}
Then for $\epsilon \leq O(1/\log(M/n))$ and $\rho \leq \Delta/(\epsilon n)$, there is an $\epsilon$-differentially private algorithm $A'(x)$ such that $\mathbb{P}[|A'(x) -g(x)| \geq \Theta(\rho g(x))] \leq 1/3$ with complexity $O(T(x,g(x),\epsilon/\log(M/m))+T(x,g(x),\epsilon \rho ))$. 
\end{theorem}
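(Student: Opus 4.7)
I would design a two-phase algorithm. Phase~1 performs a privatized geometric search over dyadic candidates $y_i = M\cdot 2^{-i}$ for $i = 0, \ldots, L := \lceil \log_2(M/m)\rceil$, producing an advice $y^\star = \Theta(g(x))$. Phase~2 then invokes \Cref{lem:privatize_poly} with $y^\star$ playing the role of the scale $f(x)$ and accuracy parameter $\rho$ to obtain the final private estimate.

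\textbf{Phase~1 (advice).} For each $i$, I would form an $(\epsilon/(2L))$-DP estimate $\hat a_i$ of $A(x, y_i, \rho_0)$ using \Cref{thm:polynomial}, with $\rho_0 := c\epsilon/L$ for a sufficiently small constant $c$. Assumptions~(1)--(2) give $\mathbb E[A(x,y_i,\rho_0)] \in [g(x) - \rho_0 y_i,\, g(x)]$ and $\mathbb E[|A - g(x)|] \le \rho_0 y_i$. Taking a median of constantly many i.i.d.\ copies of $A(x, y_i, \rho_0)$ and applying \Cref{lem:moment_amplification} boosts this $L_1$ bound to an $L_{ck}$ bound of the same order; the higher-order Chebyshev inequality then yields $\Pr[|\hat a_i - g(x)| > y_i/8] \le O(\rho_0)^{ck}$. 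Choosing the constant $k$ large enough makes this $\le 1/(6L)$, and because $\log(1/\rho_0) = \Omega(\log L)$ a \emph{constant} $k$ suffices. A union bound shows the event $E := \bigcap_i \{|\hat a_i - g(x)| \le y_i/8\}$ holds with probability $\ge 5/6$. Define $i^\star := \min\{i : \hat a_i \ge y_i/4\}$. On $E$: if $y_i \ge 8 g(x)$ then $\hat a_i \le g(x) + y_i/8 \le y_i/4$ (do not stop), whereas if $y_i \le g(x)/2$ then $\hat a_i \ge g(x) - y_i/8 \ge 3y_i/8 > y_i/4$ (already stopped). Hence $y^\star := y_{i^\star}$ satisfies $y^\star \in [g(x)/2,\, 8g(x)]$, i.e.\ $y^\star = \Theta(g(x))$.

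\textbf{Phase~2 and wrap-up.} I would apply \Cref{lem:privatize_poly} to $A(x, y^\star, \cdot)$ with $f(x) := y^\star$, privacy budget $\epsilon/2$, and failure parameter $1/12$; its precondition $\epsilon \rho \ge \Omega(\Delta/f(x))$ follows from the hypothesis on $\rho$ together with $y^\star = \Theta(g(x))$. This produces an $(\epsilon/2)$-DP estimate $A'(x)$ with $|A'(x) - g(x)| \le O(\rho g(x))$ with probability $\ge 11/12$. Composing the $L$ search-phase releases ($\epsilon/(2L)$-DP each) with the Phase~2 release ($\epsilon/2$-DP) by basic composition yields $\epsilon$-DP overall; union-bounding the failure of $E$ with the failure of Phase~2 gives total failure $\le 1/3$. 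For complexity, condition~(3) gives $T(x, y_i, \rho_0) \le O(T(x, y^\star, \rho_0)) \cdot (y^\star/y_i)^{c_1}$ for $i \le i^\star$, so $\sum_{i=0}^{i^\star} T(x, y_i, \rho_0)$ is a geometric series of ratio $2^{-c_1}$ bounded by $O(T(x, y^\star, \rho_0)) = O(T(x, g(x),\, \epsilon/\log(M/m)))$; Phase~2 contributes $O(T(x, g(x),\, \epsilon\rho))$, matching the claim.

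\textbf{Main obstacle.} The delicate step is keeping the search-phase complexity free of any $\log L$ factor while controlling all $L$ level decisions simultaneously: a naive analysis using only the first-moment bound with $\Theta(\log L)$ medians per level would introduce such a factor. My plan circumvents this by exploiting the fact that $\rho_0 = O(\epsilon/L)$ makes $(O(\rho_0))^{ck}$ exponentially small in $k$, so a constant number of medians combined with \Cref{lem:moment_amplification} already drives the per-level failure probability below $1/(6L)$. The secondary technical point to check is that the data-dependent scale $y^\star$ chosen in Phase~1 feeds cleanly into Phase~2: since $y^\star$ is determined by Phase~1's released outputs, privacy of Phase~2 is unaffected, and the $\Theta(g(x))$ guarantee on $y^\star$ on the good event absorbs into the $O(\rho g(x))$ error bound.
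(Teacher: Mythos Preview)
Your plan diverges from the paper's in a way that leaves a real gap. The paper does \emph{not} try to make every search level succeed with probability $1-O(1/L)$ and then union-bound. Instead it exploits assumption~(1)---which you never invoke---together with \Cref{lem:expectation of median} to obtain $\mathbb{E}[B+\text{noise}]\le 20\,g(x)$ and then applies Markov's inequality, so that the probability of (incorrectly) stopping at level $\tilde y$ is at most $20\,g(x)/\tilde y$; these probabilities are geometrically summable over the dyadic levels with $\tilde y\gg g(x)$ without any per-level $1/L$ requirement. Expected complexity is handled by a separate geometric tail on the number of overshoot levels once $\tilde y<g(x)$.

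The specific flaw in your argument is the claim that a constant number $k$ of medians makes $\mathbb{P}[|\hat a_i-g(x)|>y_i/8]\le (O(\rho_0))^{ck}\le 1/(6L)$. That bound tracks only the tail of the median $B_i$ and ignores the privacy noise $Y_i$. With per-level budget $\epsilon/(2L)$, \Cref{thm:polynomial} forces the noise scale to be $\Theta\bigl((\Delta+\rho_0 y_i)\cdot L/\epsilon\bigr)$; plugging in $\rho_0=c\epsilon/L$ makes the second term $\Theta(y_i)$, so $\mathbb{P}[|Y_i|>y_i/16]$ is a constant depending on $k$ and $c$ but \emph{not} on $L$, and the union bound over $L$ levels fails. (Also, $ZSPareto_\alpha$ has infinite $\alpha$-th moment, so the $\alpha$-th-order Chebyshev inequality cannot be applied to $\hat a_i=B_i+Y_i$ directly.) Shrinking $\rho_0$ to $o(\epsilon/L)$ or taking $k=\Theta(\log L)$ would repair the tail but break the stated Phase~1 complexity $T(x,g(x),\epsilon/\log(M/m))$. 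A secondary issue is that your complexity bound is conditional on the good event $E$, whereas the theorem (and the paper's proof) concerns \emph{expected} complexity; you would still need to control the cost contributed by levels visited when $E$ fails.
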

\begin{proof}
Let us denote with $B(x,y,\rho)$ independently executing $A(x,y,\rho)$ five times and taking the median. By the moment amplification lemma (\Cref{lem:moment_amplification}), we have that
\begin{align*}
\mathbb{E}[|B(x,y,\rho)-g(x)|^3] \leq 10 \rho^3 y^3
\end{align*}
Therefore, by \Cref{thm:polynomial}, $B(x,y,\rho) + ZCPareto_3(55 \rho y /\epsilon)$ is $\epsilon$-differentially private.

We now describe the algorithm $A'$. We set $\tilde{y} = M$, and in each iteration, we will decrease $\tilde{y}$ by a factor of $2$. In each iteration, we compute $B(x,\tilde{y},\frac{\epsilon}{330 \cdot 3^{c_2}\log_2(M/m)}) + ZCPareto_3(\frac{\tilde{y}}{3\cdot 3^{c_2}})$. When the returned value is $\geq \tilde{y}$, we compute and return $B(x,\tilde{y}/160,\rho \epsilon) + ZCPareto_3(\rho \tilde{y})$.

The algorithm is $\epsilon$-differentially private by composition, as we executed at most $\log_2(M/m)$ times an algorithm that was $\epsilon/(2\log_2(M/m))$ differentially private, and once an algorithm that was $\epsilon/2$-differentially private.

We now argue correctness. 
It holds $\mathbb{E}[ZCPareto_3(55\rho y/\epsilon)] = 0$ for any $\epsilon$, and by \Cref{lem:expectation of median}, we have $\mathbb{E}[B(x,y,\rho)] \leq 20 g(x)$. Therefore, 
\[
\mathbb{E}[B(x,y,\rho) + ZCPareto_3(55\rho y/\epsilon)] \leq 20 g(x)
\]
for any $\epsilon$. By the Markov inequality, the probability that $B(x,\tilde{y},\frac{\epsilon}{330 \cdot 3^{c_2}\log_2(M/m)}) + ZCPareto_3(\frac{\tilde{y}}{3 \cdot 3^{c_2}}) \geq z$ is then at most $20 g(x)/z$. The probability that we stop with $y \geq 160 g(x)$ is thus at most
\begin{align*}
\sum_{i=1}^{\lceil\log_2(M/m)\rceil} \mathbb{I}[M/2^i \geq 160 &g(x)] \mathbb{P}[B(x,y,\rho) + ZCPareto_3(55 \log_2(M/m)/\epsilon) \geq M/2^i] \\\leq& \sum_{j=0}^\infty \mathbb{P}[B(x,y,\rho) + ZCPareto_3(55 \log_2(M/m)/\epsilon) \geq 160 \cdot 2^j g(x)] \\\leq& \sum_{j=0}^\infty 2^{-j/8} = 1/4
\end{align*}
Therefore, with probability at least $3/4$, we have in the last iteration that $\tilde{y}/160 \leq g(x)$. We call this event $\mathcal{E}$. On $\mathcal{E}$, we have 
\begin{align*}
\mathbb{E}[|B(x,\tilde{y}/160,\epsilon \rho)| + ZCPareto_3(\rho y) - g(x)|] \leq& \mathbb{E}[|B(x,\tilde{y}/160,\epsilon \rho)| - g(x)|] + \mathbb{E}[|ZCPareto_3(\rho y)|] \\\leq& O(\epsilon \rho g(x)) + O(\rho g(x)) = O(\rho g(x))
\end{align*}
where the first term comes from the assumption $2)$ and the second is because $\mathbb{E}[|ZCPareto_3(a)|]= a$ \footnote{This can be easily checked: $\mathbb{E}[|ZCPareto_3(a)|] = \int_0^\infty \mathbb{P}[|ZCPareto_3(a)|\geq t] dt = [-\frac{(a+t) \left(\frac{a+t}{s}\right)^{3} (2 t+a)}{a}]_{t=0}^\infty = a$.}. On $\mathcal{E}$, we thus have by the Markov inequality that the error is $O(\rho g(x))$ with probability $1-1/12$ (or any probability arbitrarily close to $1$). Adding the probability of $\neg \mathcal{E}$, we have that the error is $O(\rho g(x))$ with probability $\geq 2/3$, as we wanted to prove.

It remains to argue time complexity. Suppose $\tilde{y} \leq g(x)$. Then 
\begin{align*}
\mathbb{E}[|B(x,\tilde{y},\frac{\epsilon}{330\cdot 3^{c_2} \log_2(M/m)})& + ZCPareto_3(\tilde{y}/(3\cdot 3^{c_2}))-g(x)|] \\\leq& \mathbb{E}[|B(x,\tilde{y},\frac{\epsilon}{330\cdot 3^{c_2} \log_2(M/m)})-g(x)|] + \mathbb{E}[|ZCPareto_3(\tilde{y}/(3\cdot 3^{c_2}))|] \\\leq& \frac{g(x)}{330\cdot 3^{c_2}} + \frac{g(x)}{3\cdot 3^{c_2}} < \frac{g(x)}{2\cdot 3^{c_2}}
\end{align*}
and by the Markov inequality, it thus holds 
\[
\mathbb{P}[B(x,\tilde{y},\frac{\epsilon}{330\cdot 3^{c_2} \log_2(M/m)}) + ZCPareto_3(\tilde{y}/(3 \cdot 3^c)) \geq g(x)/2] \geq 1-1/3^{c_2}
\]
This means that if $\tilde{y} \leq g(x)/2$, we stop in each iteration with probability $\geq 1- 1/3^{c_2}$. If $\ell$ denotes the number of iterations, then $\mathbb{P}[\ell \geq \lceil\log_2(2M/g(x))\rceil + i] \leq 3^{-c i}$. The expected complexity of the executions of $B(x,\tilde{y},\frac{\epsilon}{330 \cdot 3^{c_2} \log_2(M/m)})$ is thus
\begin{align*}
\sum_{i=0}^{\log(M/m)} T(x,M/2^i,\frac{\epsilon}{330 \cdot 3^{c_2} \log_2(M/m)}) \mathbb{P}[\ell \geq i] \leq& \quad \smashoperator{\sum_{i=0}^{\lceil\log_2(M/g(x))\rceil}}\,\, T(x,M/2^i,\frac{\epsilon}{330 \cdot 3^{c_2} \log_2(M/m)}) \\&+\smashoperator{\sum_{i=\lceil\log_2(2M/g(x))\rceil}^\infty} \,\, (3c)^{-i} T(x,M/2^i,\frac{\epsilon}{330 \cdot 3^{c_2} \log_2(M/m)}) \\\leq& \sum_{j=0}^\infty T(x,g(x)2^{j}/2,\frac{\epsilon}{330 \cdot 3^{c_2} \log_2(M/m)}) \\&+ \sum_{j=0}^\infty T(x,g(x)/2^j,\frac{\epsilon}{330 \cdot 3^{c_2} \log_2(M/m)}) \mathbb{P}[\ell \geq i] \\\leq& \sum_{j=0} T(x,g(x),\frac{\epsilon}{330 \cdot 3^{c_2} \log_2(M/m)}) 2^{-c_1(j-1)} \\&+ \sum_{j=0}^\infty T(x,g(x),\frac{\epsilon}{330 \cdot 3^{c_2} \log_2(M/m)}) 2^{c_2 i} 3^{-c_2 i} \\=& O(T(x,g(x),\frac{\epsilon}{330\cdot 3^{c_2} \log_2(M/m)}))
%
%
\end{align*}
By the exact same argument, the executions of $B(x,\tilde{y}/160,\epsilon \rho)$ contribute $O(T(x,\tilde{y},\epsilon \rho))$, as the computation above is correct even if $T(x,M/2^i,\frac{\epsilon}{330 \cdot 3^{c_2} \log_2(M/m)})$ is replaced by $T(x,M/(160 \cdot 2^i),\rho \epsilon)$. This concludes the proof
\end{proof}

This implies a differentially private algorithm for estimating the average degree of a graph. As the algorithm $A$ of \Cref{lem:geom_search_for_sublinear_alg}, we use the algorithm of \citet{seshadhri2015simpler}. This improves upon the work of \citet{blocki2022privately} who give an algorithm with complexity $\tilde{O}_{\epsilon,\rho}(\sqrt{n})$ under the assumption $m \geq \Omega(n)$.
\begin{corollary}
For $\epsilon \leq O(1)$ and $\rho \leq 1/(\epsilon n)$, there exists an $\epsilon$-edge-differentially private algorithm that returns a $1+\rho$-approximation of the average degree of a graph with probability $1-\beta$ and has complexity $\tilde{O}(\frac{n \log^3 \beta^{-1}}{\epsilon^2 \rho^2 \sqrt{m}})$.
\end{corollary}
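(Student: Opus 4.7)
The plan is to instantiate \Cref{lem:geom_search_for_sublinear_alg} with $g(G)$ being the average degree of the graph $G$ and with $A$ being the sublinear-time estimator of \citet{seshadhri2015simpler}. First I would check the three preconditions of the lemma. Edge additions/deletions change the sum of degrees by $2$, so $g$ has global sensitivity $\Delta=2/n$; we have $M=O(n)$ and $m \geq \Omega(1/n)$ (we may assume the graph has at least one edge, else the answer is trivially $0$ and privacy is vacuous after capping). Seshadhri's algorithm, given advice $y$ on the average degree and accuracy $\rho$, returns a $(1\pm \rho)$-approximation using $\tilde{O}(\sqrt{n/y}/\rho^2)$ queries, so writing $T(x,y,\rho)$ for this quantity we get the scaling $T(x,y',\rho)=(y'/y)^{1/2}\,T(x,y,\rho)$, hence $c_1=c_2=1/2$ verifies condition $3$.

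Next I would verify conditions $1$ and $2$. The relative-error guarantee of Seshadhri's algorithm implies $\mathbb{E}[|A(x,y,\rho)-g(x)|]\leq O(\rho\, g(x))\leq O(\rho y)$ whenever the advice $y\geq g(x)$, which is the regime \Cref{lem:geom_search_for_sublinear_alg} uses; this gives condition $2$ up to a constant factor. Condition $1$, the one-sided expectation $\mathbb{E}[A(x,y,\rho)]\leq g(x)$, is the only non-trivial point: it may fail for Seshadhri's estimator as given. I would therefore cap the output at $g(x)$'s natural upper bound in the advice regime (for instance, project onto $[0,y]$, or replace the estimator with its lower quantile, whichever preserves the $O(\rho y)$ error bound). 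This small modification preserves both assumptions $1$ and $2$; this is the main obstacle in the proof.

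Applying \Cref{lem:geom_search_for_sublinear_alg} with parameters $\epsilon$ and $\rho$ (noting that $\Delta/(\epsilon n)=O(1/(\epsilon n^2))\leq \rho$ by the hypothesis) then produces an $\epsilon$-edge-differentially private algorithm with constant success probability and query complexity
\[
O\!\left(T\!\left(x,g(x),\tfrac{\epsilon}{\log n}\right)+T(x,g(x),\epsilon\rho)\right)
=\tilde{O}\!\left(\tfrac{\sqrt{n/d_{\mathrm{avg}}}}{\epsilon^2\rho^2}\right)
=\tilde{O}\!\left(\tfrac{n}{\epsilon^2\rho^2\sqrt{m}}\right),
\]
since $\sqrt{n/d_{\mathrm{avg}}}=\sqrt{n^2/(2m)}=\Theta(n/\sqrt{m})$.

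Finally, to boost the failure probability from the constant $1/3$ to $\beta$, I would run $k=\Theta(\log\beta^{-1})$ independent copies with privacy budget $\epsilon/k$ each and return the median. Standard Chernoff bounds give success probability $\geq 1-\beta$, basic composition gives $\epsilon$-differential privacy, and the overall complexity becomes $k$ times the per-copy complexity with $\epsilon$ replaced by $\epsilon/k$, i.e.\ $\tilde{O}(n\log^3\beta^{-1}/(\epsilon^2\rho^2\sqrt{m}))$, matching the claimed bound.
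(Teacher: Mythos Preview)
Your approach is essentially the same as the paper's: instantiate \Cref{lem:geom_search_for_sublinear_alg} with Seshadhri's estimator, then amplify by taking the median of $\Theta(\log\beta^{-1})$ copies with the privacy budget split evenly. Your complexity accounting is correct.

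The one place where you diverge is in verifying condition~1, $\mathbb{E}[A(x,y,\rho)]\leq g(x)$. You flag this as the main obstacle and propose capping at~$y$. This is unnecessary: the paper simply invokes that Seshadhri's estimator is \emph{unbiased} (which the original paper proves), so $\mathbb{E}[A(x,y,\rho)]=g(x)$ and condition~1 holds with equality. Note also that your proposed fix is a bit circular: projecting onto $[0,y]$ gives $\mathbb{E}[\min(A,y)]\leq \mathbb{E}[A]$, which bounds the expectation by $g(x)$ only if the original estimator is already unbiased (or underbiased), in which case no fix was needed; if the estimator were genuinely overbiased, capping at $y\geq g(x)$ would not help. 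For condition~2 the paper cites a variance bound from \cite{assadi} rather than the high-probability relative-error statement; your restriction ``whenever $y\geq g(x)$'' is not quite what the lemma assumes (it requires the bound for all $y$), so the variance route is cleaner here.
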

\begin{proof}
As the authors prove, their estimator is unbiased.
The variance of their algorithm is analyzed in \cite[Section~3.2]{assadi}, proving that the variance is $O(\rho^2 m^{3/2} \sqrt{y})$ and complexity $O(n/(\rho^2 \sqrt{y}))$ for $y$ being an ``advice parameter'' like in \Cref{lem:geom_search_for_sublinear_alg}. Using \Cref{lem:geom_search_for_sublinear_alg} gives us a private with constant success probability and complexity $\tilde{O}(\frac{n}{\epsilon^2 \rho^2 \sqrt{m}})$. By standard probability amplification (dividing the privacy budget between the executions), we get the desired claim.
\end{proof}

\section{Open problems and conjectures} \label{sec:open_problems}
\paragraph{Improve constants for the noise magnitude.} It seems likely that our analysis is not tight in terms of the constants. Is it possible to improve them? What are the best possible constants? Can the constants be improved for subgaussian error? Is it possible to get improved constants for some specific problems by using a definition of a subexponential diameter which differs by a constant factor\footnote{As we mentioned in \Cref{sec:preliminaries}, there are several definitions of $\sigma_{se}$, that differ by constant factors. See \cite{vershynin2018high} for the definitions.}? Is it possible to improve the constants in the subexponential case, if we assume that the estimator is unbiased (that is, that $\mathbb{E}[A(D,x)] = g(D,x)$)? Can the constants be improved in the case of polynomial tails of the error? Are there any other distributions than the zero-symmetric power-law mechanism, which would achieve differential privacy under the same assumptions, but would have smaller variance/mean absolute deviation?

\paragraph{Lower bounds for multiple queries with bounded error moments.} Suppose we want to publish the values of $k$ dependent random variables, like in the subexponential case, but the error only has a finite number of finite moments. Does there exist a distribution of noise such that adding this noise to the output of the algorithm results in differential privacy? We conjecture that this is not possible and that it is inherent that the zero-symmetric Pareto mechanism only works for a number of variables that depends on the number of finite moments of the error. If this is the case, what is the number of moments, that we need (as a function of $k$) for a multivariate variant of \Cref{thm:polynomial} to hold? If this is some slow-growing function of $k$, it may be feasible to use the median trick to bound higher moments like in \cite{larsen2021countsketches}
, thus allowing a weaker multivariate version of \Cref{thm:polynomial} when only assuming an upper bound on the mean absolute deviation, but with superlinear dependency on $k$.



\paragraph{Normal noise in the case of subgaussian error.} Suppose the algorithm's error distribution is subgaussian (and not just subexponential). Is it possible to use, in that case, the Gaussian mechanism, and get the error scale with $\sqrt{k}$ instead of with $k$ for $k$ being the number of queries?

\paragraph{What if the amount of error depends on the input?} There are some cases in which our approach does not apply. One possible reason is that the amount of error is large in the worst case, but it may be small in a certain instance. Perhaps one could use something akin to smoothed sensitivity? One notable case where this problem arises is the relative estimation of the frequency $F_p$ moments in the streaming setting. In that problem, the amount of error depends on the actual $F_p$ norm, and as such an instance-independent upper bound on the amount of error may not capture well the actual amount of error.



\section*{Acknowledgements}
I would like to thank Rasmus Pagh and anonymous reviewers for helping to improve this paper.

\bibliographystyle{plainnat}
\bibliography{literature}

\appendix
\section{Deferred proofs} \label{sec:appendix}
Proof of \Cref{lem:sub_inequality}
\begin{proof}
By the Markov inequality, we have for any $\lambda > 0$ that
\begin{align} \label{eq:some_equation}
\mathbb{P}[|\sum_{i=1}^k X_i| \geq t] = \mathbb{P}[\exp(\lambda |\sum_{i=1}^k X_i|) \geq e^{\lambda t}] \leq \mathbb{E}[\exp(\lambda |\sum_{i=1}^k X_i|)] e^{-\lambda t}
\end{align}
If we set $\lambda = (3 \sum_{i=1}^k \sigma_{se}[X_i])^{-1}$, then we get
\begin{align*}
E\left[\exp\left(\lambda |\sum_{i=1}^k X_i|\right)\right] =& E\left[\exp\left(\frac{|\sum_{i=1}^k X_i|}{3 \sum_{i=1}^k \sigma_{se}[X_i]}\right)\right] \\\leq& E\left[\exp\left(\frac{\sum_{i=1}^k |X_i|}{3 \sum_{i=1}^k \sigma_{se}[X_i]}\right)\right] \\=& E\left[\exp\left(\sum_{i=1}^k\frac{\sigma_{se}[X_i]}{\sum_{i=1}^k \sigma_{se}[X_i]}\frac{|X_i|}{3 \sigma_{se}[X_i]}\right)\right]
\\\leq& \sum_{i=1}^k \frac{\sigma_{se}[X_i]}{\sum_{i=1}^k \sigma_{se}[X_i]}E\left[\exp\left(\frac{|X_i|}{3 \sigma_{se}[X_i]}\right)\right] \leq (*)
\end{align*}
where the last line is by the Jensen inequality. We have from \Cref{lem:exponential_expectations_bound} that $\mathbb{E}[\exp(\frac{|X_i|}{3 \sigma_{se}[X_i]})] \leq \frac{2^{1/3}}{1-1/3} < 2$, where we have used that $\sigma_{se}[\frac{|X_i|}{3 \sigma_{se}[X_i]}] = 1/3$, giving us
\[
(*) \leq \sum_{i=1}^k \frac{\sigma_{se}[X_i]}{\sum_{i=1}^k \sigma_{se}[X_i]} \cdot 2 = 2\,.
\]
This allows us to use \eqref{eq:some_equation} to bound
\[
\mathbb{P}[|\sum_{i=1}^k X_i| \geq t] \leq 2e^{-t/(3 \sum_{i=1}^k \sigma_{se}[X_i])}
\]
which is equivalent to $\sigma_{se}[\sum_{i=1}^t X_i] \leq 3 \sum_{se}[X_i]$ by the definition of $\sigma_{se}$.
\end{proof}
\end{document}